\tikzstyle{decision} = [diamond, draw, fill=blue!20, 
\tikzstyle{block} = [rectangle, draw, fill=blue!20, 
\tikzstyle{line} = [draw, -latex']
\tikzstyle{cloud} = [draw, ellipse,fill=red!20, node distance=3cm,
 \newtheoremstyle{mystyle}{36pt}{}{}{}{\bfseries}{.}{ }{}
  \theoremstyle{plain}
  \theoremstyle{plain}
\newtheorem{theorem}{Theorem}[section]
\newtheorem{proposition}[theorem]{Proposition}
\newtheorem{lemma}[theorem]{Lemma}
\newtheorem{corollary}[theorem]{Corollary}
\theoremstyle{definition}
\newtheorem{definition}[theorem]{Definition}
\theoremstyle{remark}
\newtheorem{remark}[theorem]{Remark}
 \newtheorem{example}{Example}
\tikzset{main node/.style={circle,fill=blue!20,draw,minimum size=1cm,inner sep=0pt},  }
\newcommand{\sP}{\mathcal{P}}
\newcommand{\dd}{\mathcal{\dagger}}
\newcommand{\g}{\mathbf{g}}
\newcommand{\grad}{\mathrm{grad}}
\newcommand{\V}{\mathsf{V}}
\newcommand{\Hess}{\mathrm{Hess}}
\newcommand{\R}{\mathrm{R}}
\newcommand{\K}{\mathrm{K}}
\newcommand{\Dist}{\mathrm{Dist}}
\begin{document}
\title[Curvatures in hydrodynamical density manifolds]{Geometric calculations on density manifolds from reciprocal relations in hydrodynamics}
\author[Li]{Wuchen Li}
\email{wuchen@mailbox.sc.edu}
\address{Department of Mathematics, University of South Carolina, Columbia, 29208.}
\keywords{Hydrodynamics; Bregman divergences; Density manifolds; Macroscopic fluctuation curvatures.}
\begin{abstract}
Hydrodynamics describes the evolution of macroscopic states in non--equilibrium thermodynamics. Following Onsager reciprocal relations, one can formulate a large class of hydrodynamic equations as gradient flows of free energies. In recent years, such Onsager gradient flows have been extensively investigated on optimal transport type metric spaces with nonlinear mobilities, known as hydrodynamical density manifolds. A typical example is the gradient--drift Fokker--Planck equation, which can be characterized as the gradient flow of the free energy in the Wasserstein-2 metric space. This paper studies geometric calculations on general hydrodynamical density manifolds. We first formulate the associated Levi--Civita connections, gradients, Hessians, and parallel transports, and then derive the corresponding Riemannian and sectional curvatures. Finally, we obtain closed-form formulas for sectional curvatures in one dimensional density manifolds, where the signs of the curvatures are determined by the convexity of the mobility functions. As illustrations, we present density manifolds and their sectional curvatures for several zero range models, including independent particle systems, simple exclusion processes, and Kipnis--Marchioro--Presutti models.
\end{abstract}
\maketitle
\section{Introductions.}
Non--equilibrium thermodynamics plays a central role in modeling complex physical systems, chemical reactions, and biological phenomena \cite{MFT, GEN, Onsager}. For systems out of equilibrium, time--dependent statistical physics models are required. Typical examples are hydrodynamic equations, which describe irreversible macroscopic processes such as the evolution of particle density. Based on Onsager reciprocal relations \cite{Onsager}, irreversible processes often exhibit symmetric dissipative fluctuations toward equilibria. In recent years, Onsager principle has been extended to more general non--symmetric and non--equilibrium diffusive systems. Representative developments include macroscopic fluctuation theory (MFT) \cite{MFT} and the  {general equation for non--equilibrium reversible-irreversible coupling} (GENERIC) framework \cite{GEN}.

A particular form of Onsager reciprocal relation has recently gained attention in the {mathematics community}. It identifies a class of hydrodynamic equations, such as the independent particle zero range model \cite{MFT}, that satisfy a gradient flow structure on density spaces. This structure corresponds to the gradient--drift Fokker--Planck equation, well known in the optimal transport community as a Wasserstein gradient flow \cite{JKO, otto2001}. In fact, this gradient operator defines an infinite dimensional Riemannian manifold whose metric is widely studied as the Wasserstein--2 metric \cite{am2006, otto2001, vil2008}. {In particular, the geometric Riemannian formalism for density manifolds originates from Otto's work \cite{otto2001}, known as Otto calculus, which was initially developed for the study of Wasserstein--$2$ gradient flows.} More generally, the Onsager response operator associated with hydrodynamics induces a family of Wasserstein type Riemannian metrics with nonlinear mobilities \cite{C1, LiYing}. In this paper, we refer to density spaces equipped with such nonlinear Wasserstein-type metrics as hydrodynamical density manifolds.

Geometric calculations on hydrodynamical density manifolds have become essential for understanding fluctuation relations and convergence properties in general non--equilibrium thermodynamics. For example, \cite{PHC, Dean, WD} introduces Langevin dynamics on density spaces--often called super-Brownian motion or stochastic Fokker--Planck equation whose behavior depends on geometric quantities such as second--order differential operators on these manifolds. This raises a natural question: {\em What are the geometric quantities, such as the Riemannian and sectional curvatures, of density manifolds?} 

In this paper, we focus on the first question. We derive several geometric calculations for hydrodynamical density manifolds with nonlinear mobilities, {which can be viewed as generalized Otto calculuses \cite{otto2001}.} We first study the Levi--Civita connection, gradient, and Hessian operators, and subsequently compute the Riemannian curvature and sectional curvature tensors. In one dimensional density spaces, we obtain explicit formulas for these curvatures in Theorem \ref{thm3} and show that the convexity of the mobility function determines the sign of the sectional curvature. We further present explicit geometric computations for several zero range models, including independent particles, simple exclusion processes, and the Kipnis--Marchioro--Presutti model \cite{KMP}.

In the literature, geometric computations on the Wasserstein--2 density manifold were first introduced by \cite{Lafferty} in Lagrangian coordinates and later derived by \cite{Lott} in Eulerian coordinates. The second order analysis in the Wasserstein-2 space has been studied in \cite{Gigli2011}. These correspond to two classical descriptions in fluid mechanics: the flow map formulation and the density based formulation. Hessian operators in generalized density manifolds have been investigated in \cite{C1, LiG1, LiG2, LiG3, LiG4, LiG5}. However, systematic Riemannian computations--particularly curvature formulas--for general hydrodynamical density manifolds remain largely unknown. Here we develop Riemannian calculations for manifolds with nonlinear mobility functions using the Eulerian coordinate representation. In addition, we compute Riemannian, sectional, Ricci, and scalar curvatures induced by Onsager response operators in hydrodynamics. For this reason, we refer to these curvature quantities as macroscopic curvatures. In future work, we could apply the macroscopic curvatures in studying physical behaviors of hydrodynamics. For example, we will estimate such macroscopic curvatures to study free--energy dissipation toward macroscopic diffusion processes, including super--Brownian motion and stochastic Fokker--Planck dynamics \cite{PHC, Dean, WD}. Meanwhile, it is worth mentioning that the quantities in density manifolds also connects with the ones often studied in information geometry \cite{Amari2009,AyJostLeSchwachhoefer2017,Zhang2004}. The information geometry often studies the geometric structures of free energies, such as Bregman divergences in the $L^2$ space. {We note that} the generalized density manifold investigates the metric operator from an elliptic operator with a mobility function, where the mobility function relates to the Hessian of free energies by the local Einstein relation. Following these connections, one observes that curvatures in density manifolds contain high order derivatives of the Bregman potential. More details connecting information geometry and Riemannian calculuses on density manifolds are left in the future work. 

The paper is organized as follows. Section \ref{sec2} briefly reviews hydrodynamics, Onsager reciprocal relations, and hydrodynamical density manifolds. Section \ref{sec3} derives the Levi--Civita connection, parallel transport equations, and curvature tensors for generalized density manifolds. In section \ref{sec4}, we provide explicit Riemannian and sectional curvatures in one--dimensional settings. Section \ref{sec5} presents several concrete examples of metrics and sectional curvatures in density manifolds.
\section{Onsager reciprocal relations and hydrodynamical density manifolds}\label{sec2}
In this section, we first consider a hydrodynamic description of an out--of--equilibrium physical system. For simplicity of discussion, we focus on a single conservation law of density function. We next review the Onsager reciprocal relation, including the Onsager response operator and the force function \cite{MFT}. It allows us to define an infinite dimensional manifold on density space, namely a hydrodynamical density manifold. We last present the inner product, gradient operator, and arc length of curves defined in the hydrodynamical density manifold \cite{C1, O1}. 
\subsection{Hydrodynamic description}
We consider an example in \cite{MFT}. Throughout this paper, we define $\Omega= \mathbb{R}^d$, a $d$--dimensional Euclidean space. Let $x\in \Omega$  be a spatial variable, and $t>0$ be a time variable. Denote $\int=\int_\Omega$ as the integration symbol over the spatial domain $\Omega$, and write $dx$ as {the differential} in Euclidean space.  

Consider a physical system on $\Omega$ at the macroscopic level, which is fully characterized by the density variable $\rho(t)=\rho(t,\cdot)\in\mathbb{R}_+$ and its local current function $J(\rho)\in \mathbb{R}^d$. Assume that the evolution of the density function satisfies the continuity equation:
\begin{equation*}
\partial_t\rho(t)+\nabla\cdot J(\rho(t))=0, \quad \rho(0)=\rho_0,
\end{equation*}
where we omit the dependence on the $x$ variable for variable $\rho$ and $\rho_0$ is an initial density function with $\int \rho_0 dx=1$, $\rho_0\geq 0$. The continuity equation is in a conservation form, meaning that 
\begin{equation*}
\int \rho(t) dx=\int \rho_0 dx=1. 
\end{equation*}
Consider a diffusion system such that the current satisfies 
\begin{equation*}
J(\rho):=\chi(\rho) E-D(\rho)\nabla\rho, 
\end{equation*}
where $\chi\in C^{\infty}(\mathbb{R}_+; \mathbb{R}^{d\times d})$ is a {symmetric positive} definite mobility matrix, i.e.,  $\chi(\rho)\succ 0$, $D\in C^{\infty}(\mathbb{R}_+; \mathbb{R}^{d\times d})$ is a {symmetric positive} definite diffusion matrix, i.e.,  $D(\rho)\succ 0$, and $E\in C^{\infty}(\Omega; \mathbb{R}^d)$ is an external vector field. Combining the above facts, the time evolution of the continuity equation associated with a diffusion system satisfies 
\begin{equation}\label{hydro}
\partial_t\rho(t)+\nabla\cdot(\chi(\rho(t)) E)=\nabla\cdot(D(\rho(t))\nabla \rho(t)).
\end{equation}
The diffusion coefficient $D$ and transport coefficient $\chi$ are matrices, defined by the local Einstein relation: 
\begin{equation}\label{E1}
D(\rho):=\chi(\rho)f''(\rho), 
\end{equation}
where $f\in C^{2}(\mathbb{R})$ is a convex function. We next characterize the equilibrium state of the system \eqref{hydro}. Denote $\pi\in C^{\infty}(\Omega; \mathbb{R}_+)$ as an equilibrium state, which is the stationary solution of equation \eqref{hydro}. This means 
\begin{equation*}
\nabla\cdot J(\pi)=\nabla\cdot(\chi(\pi)E)-\nabla\cdot(D(\pi)\nabla\pi)=0. 
\end{equation*}
We assume that $\pi$ is a unique stationary solution for equation \eqref{hydro} with $J(\pi)=0$. The external vector field $E$ determines the equilibrium state $\pi$. From now on, we only consider an inhomogeneous equilibrium, where the external vector field is a gradient vector 
\begin{equation}\label{E2}
E(x)=-\nabla U(x), \quad \textrm{$U\in C^{\infty}(\Omega)$ is a potential function}.
\end{equation}
\begin{remark}
{
We emphasize that in this paper we only consider dynamics generated by the gradient vector field $E$. 
This assumption guarantees micro-reversibility of the underlying stochastic process and yields a gradient flow formulation in the density space, which in turn determines the metric structure on the density manifold. In general, if $E$ is not a gradient vector field, the evolution \eqref{hydro} becomes a non-gradient dynamical system containing both gradient and non-gradient components on the density manifold. Such situations have been studied in macroscopic fluctuation theory (MFT) and the GENERIC framework. For simplicity, we restrict attention to the gradient flow--induced density manifold and derive the corresponding geometric calculations.}
\end{remark}
\subsection{Onsager reciprocal relations} 
We next demonstrate that equation \eqref{hydro} with conditions \eqref{E1} and \eqref{E2} satisfies the Onsager reciprocal relations. 

We need some notations. Given a functional $F\in C^{\infty}(\mathcal{P}_+, \mathbb{R})$, we write the $L^2$ first and second variational operators as follows. Denote a smooth testing function $h\in C^{\infty}(\Omega)$ and a scalar $\epsilon\in \mathbb{R}$. Define 
\begin{equation*}
\frac{d}{d\epsilon}F(\rho+\epsilon h)|_{\epsilon=0}:=\int \frac{\delta}{\delta \rho}F(\rho)(x) h(x)dx, 
\end{equation*}
and 
\begin{equation*}
\frac{d^2}{d\epsilon^2}F(\rho+\epsilon h)|_{\epsilon=0}:=\int\int \frac{\delta^2}{\delta \rho^2}F(\rho)(x,y) h(x)h(y)dxdy, 
\end{equation*}
where $\frac{\delta}{\delta\rho}F$ is the $L^2$ first variation operator of $F$, and $\frac{\delta^2}{\delta \rho^2}F$ is the $L^2$ second variation operator of $F$. 

Denote a free energy functional as:
\begin{equation*}
\mathrm{D}_{f}(\rho, \pi)=\int \Big[f(\rho)-f(\pi)-f'(\pi)(\rho-\pi)\Big] dx, 
\end{equation*}
where $f\in C^2(\mathbb{R};\mathbb{R})$ is a strictly convex function. {E.g., $f$ can be an entropy function, widely studied in thermodynamics; see examples in section \ref{sec5}.}
We note that $\mathrm{D}_{f}$ is a Bregman divergence between current density $\rho$ and equilibrium $\pi$ in $L^2$ space. We note that
\begin{equation*}
\frac{\delta}{\delta\rho}\mathrm{D}_{f}(\rho, \pi)=f'(\rho)-f'(\pi), 
\end{equation*}
where $\frac{\delta}{\delta \rho}$ is the $L^2$ first variation operator. In the literature \cite{MFT}, $\nabla\frac{\delta}{\delta\rho}\mathrm{D}_{f}(\rho, \pi)$ is named the thermodynamic force.
\begin{proposition}[Onsager reciprocal relations \cite{MFT}]
Assume that $\chi(\pi)\in\mathbb{R}^{d\times d}$ is a positive definite matrix. Then the current of equation \eqref{hydro} with conditions \eqref{E1} and \eqref{E2} is proportional to the thermodynamic force with a mobility function:
\begin{equation*}
J(\rho)=-\chi(\rho)\nabla\frac{\delta}{\delta\rho}\mathrm{D}_{f}(\rho, \pi). 
\end{equation*}
In other words, equation \eqref{hydro} can be rewritten as follows: 
\begin{equation}\label{gradient}
\partial_t\rho(t)=-\nabla\cdot J(\rho(t))=\nabla\cdot\left(\chi(\rho(t))\nabla \frac{\delta}{\delta\rho}\mathrm{D}_{f}(\rho(t), \pi)\right).
\end{equation}
Formulation \eqref{gradient} is often called the {\em Onsager gradient flow}.
\end{proposition}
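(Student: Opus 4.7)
The plan is a direct verification: substitute the explicit form of the Bregman free energy into the right-hand side of the claimed identity and match it with $J(\rho)=\chi(\rho)E-D(\rho)\nabla\rho$, using the local Einstein relation \eqref{E1} and the equilibrium condition $J(\pi)=0$ to absorb the external field $E=-\nabla U$ into a gradient of a function of $\pi$.

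First, I would compute the $L^2$ first variation already recorded in the excerpt,
\begin{equation*}
\frac{\delta}{\delta\rho}\mathrm{D}_f(\rho,\pi)=f'(\rho)-f'(\pi),
\end{equation*}
and differentiate spatially to obtain the thermodynamic force $\nabla\tfrac{\delta}{\delta\rho}\mathrm{D}_f(\rho,\pi)=f''(\rho)\nabla\rho-f''(\pi)\nabla\pi$. Multiplying by $-\chi(\rho)$ and invoking \eqref{E1} in the form $\chi(\rho)f''(\rho)=D(\rho)$ yields
\begin{equation*}
-\chi(\rho)\nabla\tfrac{\delta}{\delta\rho}\mathrm{D}_f(\rho,\pi)=-D(\rho)\nabla\rho+\chi(\rho)\,f''(\pi)\nabla\pi.
\end{equation*}

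Next, I would use the hypothesis that $\pi$ is a stationary solution with vanishing current $J(\pi)=0$ together with the assumption \eqref{E2} that $E=-\nabla U$. Pointwise, $J(\pi)=0$ reads $\chi(\pi)E=D(\pi)\nabla\pi=\chi(\pi)f''(\pi)\nabla\pi$. Since $\chi(\pi)$ is positive definite, hence invertible, one can cancel it to conclude $E=f''(\pi)\nabla\pi=\nabla f'(\pi)$, i.e., the equilibrium is determined by the relation $f'(\pi)+U=\mathrm{const}$. Substituting $f''(\pi)\nabla\pi=E$ back into the previous display gives
\begin{equation*}
-\chi(\rho)\nabla\tfrac{\delta}{\delta\rho}\mathrm{D}_f(\rho,\pi)=-D(\rho)\nabla\rho+\chi(\rho)E=J(\rho),
\end{equation*}
which is exactly the desired identity. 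Plugging this into $\partial_t\rho+\nabla\cdot J(\rho)=0$ produces the gradient flow form \eqref{gradient}.

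This is essentially a one-line calculation once the pieces are aligned; the only subtle point is that the proof uses $J(\pi)=0$ pointwise rather than merely $\nabla\cdot J(\pi)=0$, and it crucially exploits positive definiteness of $\chi(\pi)$ to cancel the matrix factor. Both are assumed in the statement, so no further work is needed; the argument is complete and requires no approximation or regularity hypotheses beyond those already placed on $\chi$, $D$, $f$, and $U$.
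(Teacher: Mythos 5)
Your proposal is correct and follows essentially the same route as the paper: both use the Einstein relation \eqref{E1} to rewrite $D(\rho)\nabla\rho=\chi(\rho)\nabla f'(\rho)$, both invoke $J(\pi)=0$ together with the positive definiteness of $\chi(\pi)$ to deduce $E=\nabla f'(\pi)$, and both then assemble $J(\rho)=-\chi(\rho)\nabla\big(f'(\rho)-f'(\pi)\big)$. The only difference is cosmetic — you expand the right-hand side and match it to $J(\rho)$, whereas the paper starts from $J(\rho)$ and factors — and your remark that $J(\pi)=0$ is used pointwise (which the paper does explicitly assume) is accurate.
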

\begin{proof}
From condition \eqref{E1}, we have
\begin{equation}\label{E0}
D(\rho)\nabla \rho=\chi(\rho)f''(\rho)\nabla \rho=\chi(\rho)\nabla f'(\rho), 
\end{equation}
where we use the fact that $f''(\rho)\nabla\rho=\nabla f'(\rho)$. In addition, conditions \eqref{E1}, \eqref{E2} imply that 
\begin{equation*}
J(\pi)=-\chi(\pi)\nabla U-\chi(\pi)f''(\pi)\nabla \pi=-\chi(\pi)\nabla\left(U+f'(\pi)\right)=0, 
\end{equation*}
where we use the fact that $f''(\pi)\nabla\pi=\nabla f'(\pi)$. Since the matrix $\chi(\pi)$ is positive definite, we have \begin{equation}\label{E3}
E(x)=-\nabla U(x)=\nabla f'(\pi(x)). 
\end{equation}
From equations \eqref{E0} and \eqref{E3}, we have 
\begin{equation*}
\begin{split}
J(\rho)=&\chi(\rho) E-D(\rho)\nabla\rho=-\chi(\rho) \nabla U-D(\rho)\nabla\rho\\
=&-\chi(\rho)\nabla U-\chi(\rho)\nabla f'(\rho)=-\chi(\rho)\nabla \big(U+f'(\rho)\big)\\
=&-\chi(\rho)\nabla \big(f'(\rho)-f'(\pi)\big). 
\end{split}
\end{equation*}
From the fact that $\frac{\delta}{\delta \rho}\mathrm{D}_f(\rho, \pi)=f'(\rho)-f'(\pi)$, we finish the proof. 
\end{proof}
Based on formulation \eqref{gradient}, the following free energy dissipation property holds. For vectors $u$, $v\in\mathbb{R}^{d}$, we denote $(u ,\chi(\rho)v)=\sum_{i,j=1}^d u_iv_j\chi_{ij}(\rho)$.
\begin{proposition}[{de Bruijn's identity with mobilities}]
Suppose $\rho(t)$ satisfies equation \eqref{hydro} with conditions \eqref{E1} and \eqref{E2}. Then 
\begin{equation*}
\begin{split}
\frac{d}{dt}\mathrm{D}_{f}(\rho(t), \pi)=&-\int \left(\nabla\frac{\delta}{\delta\rho}\mathrm{D}_f(\rho(t), \pi), \chi(\rho(t))\nabla \frac{\delta}{\delta\rho}\mathrm{D}_f(\rho(t), \pi)\right)dx\leq 0. 
\end{split}
\end{equation*}
\end{proposition}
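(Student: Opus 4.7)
The plan is to differentiate the free energy along the flow, substitute the gradient-flow formulation of the previous proposition, and then integrate by parts on the torus so that $\nabla\cdot$ transfers onto the first variation. Positive semi-definiteness of $\chi$ then yields the sign.

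First, I compute
\begin{equation*}
\frac{d}{dt}\mathrm{D}_f(\rho(t),\pi)=\int_\Omega \frac{\delta}{\delta\rho}\mathrm{D}_f(\rho(t),\pi)\,\partial_t\rho(t)\,dx,
\end{equation*}
using the chain rule for $L^2$ first variations; this is legitimate since $f\in C^2$ and $\pi$ does not depend on $t$, so only the $f(\rho)-f'(\pi)\rho$ terms contribute and their derivative is exactly $f'(\rho)-f'(\pi)=\tfrac{\delta}{\delta\rho}\mathrm{D}_f(\rho,\pi)$.

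Next, I use the generalized Onsager gradient flow formulation \eqref{gradient} established in the previous proposition, namely $\partial_t\rho = \nabla\cdot\bigl(\chi(\rho)\nabla\tfrac{\delta}{\delta\rho}\mathrm{D}_f(\rho,\pi)\bigr)$, to rewrite the right-hand side as
\begin{equation*}
\int_\Omega \frac{\delta}{\delta\rho}\mathrm{D}_f(\rho,\pi)\,\nabla\cdot\Bigl(\chi(\rho)\nabla \frac{\delta}{\delta\rho}\mathrm{D}_f(\rho,\pi)\Bigr)dx.
\end{equation*}
Integration by parts on the torus $\Omega=\mathbb{T}^d$ produces no boundary terms, so this equals
\begin{equation*}
-\int_\Omega \Bigl(\nabla\frac{\delta}{\delta\rho}\mathrm{D}_f(\rho,\pi),\,\chi(\rho)\nabla\frac{\delta}{\delta\rho}\mathrm{D}_f(\rho,\pi)\Bigr)dx,
\end{equation*}
which is the claimed identity. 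Non-positivity is immediate from the hypothesis that $\chi(\rho)$ is a semi-positive definite matrix pointwise, so the integrand is of the form $(v,\chi(\rho)v)\ge 0$.

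There is no real obstacle: the argument is the standard energy-dissipation calculation for Onsager gradient flows, and every ingredient (the chain rule for the Bregman divergence, the equivalence between \eqref{hydro} and \eqref{gradient}, and the boundarylessness of $\mathbb{T}^d$) is already in place. The only point worth flagging is regularity: the computation presumes enough smoothness of $\rho(t,\cdot)$ in $x$ to perform the integration by parts and enough smoothness in $t$ to exchange $\frac{d}{dt}$ with $\int_\Omega$; for classical solutions on $\mathbb{T}^d$ this is automatic, and in a weaker setting the identity should be read distributionally or along a regularization.
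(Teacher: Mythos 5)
Your proof is correct and follows essentially the same route as the paper: differentiate the Bregman divergence via the $L^2$ first variation, substitute the Onsager gradient-flow form \eqref{gradient}, integrate by parts on the torus, and conclude from the semi-positive definiteness of $\chi$. Your added remarks on regularity and the absence of boundary terms are reasonable but do not change the argument.
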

\begin{proof}
We note that 
\begin{equation*}
\begin{split}
\frac{d}{dt}\mathrm{D}_{f}(\rho, \pi)=&\int \frac{\delta}{\delta\rho}\mathrm{D}_f(\rho(t), \pi)\cdot \partial_t\rho(t) dx\\
=&\int \frac{\delta}{\delta\rho}\mathrm{D}_f(\rho(t), \pi) \cdot \nabla\cdot(\chi(\rho(t))\nabla \frac{\delta}{\delta\rho}\mathrm{D}_{f}(\rho(t), \pi)) dx\\
=&-\int \left(\nabla\frac{\delta}{\delta\rho}\mathrm{D}_f(\rho(t), \pi), \chi(\rho(t))\nabla \frac{\delta}{\delta\rho}\mathrm{D}_f(\rho(t), \pi)\right)dx\leq 0,
\end{split}
\end{equation*}
where the third equality is from the integration by parts formula. The last inequality is based on the fact that {the matrix function $\chi(\rho)$} is positive definite. 
\end{proof}
In physics, the dissipation of free energy represents the second law of thermodynamics. The free energy dissipation is based on the gradient flow structure in Onsager reciprocal relation. In geometry, Onsager gradient flow \eqref{gradient} induces an infinite dimensional Riemannian manifold on the density space. In other words, the hydrodynamic evolution \eqref{hydro} is the steep descent direction. Shortly, we present the definition of the Riemannian metric. To do so, we denote the {\em Onsager response operator} as $-\Delta_{\chi}\colon C^{\infty}(\Omega)\rightarrow C^{\infty}(\Omega)$. For any function $\Phi\in C^{\infty}(\Omega)$, define   
\begin{equation*}
-\Delta_{\chi}\Phi:=-\nabla\cdot(\chi(\rho)\nabla \Phi).  
\end{equation*}
\subsection{Hydrodynamical density manifold}
We next introduce the infinite dimensional density manifold or metric space based on Onsager reciprocal relations \eqref{gradient}. In the metric space, we define gradient operators, arc lengths of curves, and distance functionals between two density functions. 

Denote the smooth positive density space as 
\begin{equation*}
\sP_+:=\big\{\rho \in C^{\infty}(\Omega) \colon \int \rho dx=1, \quad \rho> 0\big\}.
\end{equation*}
Define the smooth tangent space at density function $\rho\in\mathcal{P}_+$ as
\begin{equation*}
T_\rho\mathcal{P}_+ = \big\{\sigma \in C^{\infty}(\Omega)\colon  \int \sigma dx=0 \big\}.
\end{equation*}
We note that the operator $(-\Delta_{\chi})$ is a symmetric positive definite. In other words, for two functions $\Phi_1$, $\Phi_2\in C^{\infty}(\Omega)$, then  
\begin{equation*}
\int \Phi_1\cdot (-\Delta_{\chi}\Phi_2)dx=\int (\nabla\Phi_1, \chi(\rho)\nabla\Phi_2)dx=\int \Phi_2\cdot (-\Delta_{\chi}\Phi_1)dx.
\end{equation*}

Denote the pseudo-inverse of the Onsager response operator as 
\begin{equation*}
-\Delta_{\chi}^{\dagger}\colon C^{\infty}(\Omega)\rightarrow C^{\infty}(\Omega),\end{equation*}
where $\dagger$ represents the pseudo-inverse operator. Given a function $\sigma\in T_\rho \sP_+$, and a function $\Phi\in C^{\infty}(\Omega)$, we write
\begin{equation*}
-\nabla\cdot(\chi(\rho)\nabla \Phi)=-\Delta_{\chi}\Phi=\sigma, \qquad \Phi:=-\Delta_\chi^{\dagger}\sigma. 
\end{equation*}
For any function $\Phi\in C^{\infty}(\Omega)$ up to a constant shift, we denote a tangent vector field in $T_\rho\mathcal{P}_+$ as
\begin{equation*}
\V_{\Phi}:=-\Delta_{\chi}\Phi=-\nabla\cdot(\chi(\rho) \nabla \Phi)\in T_\rho\mathcal{P}_+. 
\end{equation*}
Assume that the map $\Phi\rightarrow \V_{\Phi}$ is an isomorphism $C^{\infty}(\Omega)/\mathbb{R}\rightarrow T_{\rho}\mathcal{P}_+$. 

\begin{definition}[Hydrodynamical metric tensor]
Define the inner product $\g:\mathcal{P}_+\times T_\rho\mathcal{P}_+\times T_\rho\mathcal{P}_+\rightarrow\mathbb{R}$ as 
\begin{equation*}
\begin{split}
\g(\rho)(\V_{\Phi_1}, \V_{\Phi_2}):=\langle \V_{\Phi_1}, \V_{\Phi_2}\rangle(\rho):=-\int \V_{\Phi_1}\cdot\Delta_\chi^{\dagger}\V_{\Phi_2} dx,
\end{split}
\end{equation*}
where $\Phi_k\in C^\infty(\Omega)$, $k=1,2$, and 
\begin{equation*}
 \V_{\Phi_k}=-\Delta_{\chi}\Phi_k=-\nabla\cdot(\chi(\rho) \nabla \Phi_k)\in T_\rho\mathcal{P}_+. 
\end{equation*}
In other words,   
\begin{equation*}
\begin{split}
\langle \V_{\Phi_1}, \V_{\Phi_2}\rangle(\rho)=&-\int \Delta_\chi\Phi_1 \cdot \Delta_{\chi}^{\dd}( \Delta_{\chi}(\Phi_2))dx\\
=&-\int \Phi_1\cdot \Delta_\chi(\Delta_{\chi}^{\dd}(\Delta_{\chi}(\Phi_2)))dx\\
=&-\int \Phi_1\cdot \Delta_{\chi}\Phi_2dx\\ 
=& \int (\nabla \Phi_1, \chi(\rho)\nabla\Phi_2)dx,
\end{split}
\end{equation*}
where we use the fact that $\Delta_\chi(\Delta_{\chi}^{\dd}(\Delta_{\chi}\Phi))=\Delta_{\chi}\Phi$, for any {function} $\Phi\in C^\infty(\Omega)$, and the integration by parts formula. 
\end{definition}
The inner product $\g$ introduces an infinite dimensional Riemannian metric on the density space $\mathcal{P}_+$; see related studies in \cite{C1}. We note that the metric $\g$ is derived from Onsager reciprocal relations and hydrodynamics. In this reason, we call $(\mathcal{P}_+, \g)$ the {\em hydrodynamical density manifold}. 

We next present some quantities in density manifold $(\mathcal{P}_+, \g)$. 
We first study the gradient operator in the density manifold. Denote $\bar\grad\colon \sP_+\times C^\infty(\mathcal{P}_+; \mathbb{R})\rightarrow T_\rho\sP_+$. 
\begin{proposition}[Gradient operators]
Denote an energy functional $F\in C^\infty (\mathcal{P}_+; \mathbb{R})$. The gradient operator of functional $F$ in $(\mathcal{P}_+, \g)$ satisfies  
\begin{equation*}
\bar\grad F(\rho)=-\Delta_{\chi}\frac{\delta}{\delta \rho} F(\rho)=-\nabla\cdot(\chi(\rho) \nabla\frac{\delta}{\delta \rho} F(\rho)).\end{equation*}
In particular, if $F(\rho)=\mathrm{D}_{f}(\rho, \pi)$, then   
\begin{equation}\label{Onsager}
\begin{split}
\bar\grad \mathrm{D}_{f}(\rho, \pi)=&-\Delta_{\chi}\frac{\delta}{\delta\rho}\mathrm{D}_{f}(\rho, \pi)
=-\nabla\cdot J(\rho).
\end{split}
\end{equation}
\end{proposition}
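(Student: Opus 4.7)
The proposition asserts that the Riemannian gradient of $F$ with respect to the hydrodynamical metric $\g$ is represented by the potential $\frac{\delta F}{\delta\rho}$ through the Onsager response operator. The proof is a direct application of the defining property of a Riemannian gradient, combined with the explicit form of $\g$ that was already derived.

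The plan is to start from the defining equation of the gradient: $\bar\grad F(\rho)$ is the unique element of $T_\rho\mathcal{P}_+$ such that
\begin{equation*}
\g(\rho)\bigl(\bar\grad F(\rho),\, \V_\Phi\bigr) \;=\; dF(\rho)[\V_\Phi] \;=\; \int_\Omega \frac{\delta F}{\delta\rho}(\rho)\, \V_\Phi\, dx
\end{equation*}
for every tangent vector $\V_\Phi \in T_\rho\mathcal{P}_+$. Since the map $\Phi \mapsto \V_\Phi$ is an isomorphism from $C^\infty(\Omega)/\mathbb{R}$ onto $T_\rho\mathcal{P}_+$, I can write $\bar\grad F(\rho) = \V_\Psi = -\Delta_\chi \Psi$ for some potential $\Psi$ determined up to an additive constant.

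Next I compute the left-hand side using the third form of the metric established in the definition:
\begin{equation*}
\g(\rho)(\V_\Psi, \V_\Phi) \;=\; \int_\Omega (\nabla \Psi,\, \chi(\rho)\nabla \Phi)\, dx \;=\; -\int_\Omega \Psi\, \Delta_\chi \Phi\, dx \;=\; \int_\Omega \Psi\, \V_\Phi\, dx,
\end{equation*}
where the second equality is integration by parts (the boundary terms vanish because $\Omega = \mathbb{T}^d$). Setting this equal to $\int \frac{\delta F}{\delta\rho} \V_\Phi\, dx$ and ranging over all $\V_\Phi \in T_\rho\mathcal{P}_+$ (i.e.\ all smooth functions with zero mean), I conclude that $\Psi - \frac{\delta F}{\delta\rho}$ is orthogonal in $L^2$ to all mean-zero test functions, hence constant. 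Therefore $\Psi = \frac{\delta F}{\delta\rho}$ modulo constants, and the constant drops out when one applies $-\Delta_\chi$, yielding the claimed formula $\bar\grad F(\rho) = -\Delta_\chi \frac{\delta F}{\delta\rho}(\rho) = -\nabla\cdot\bigl(\chi(\rho)\nabla \frac{\delta F}{\delta\rho}(\rho)\bigr)$.

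For the special case $F(\rho) = \mathrm{D}_f(\rho,\pi)$, I simply substitute the already-computed first variation $\frac{\delta}{\delta\rho}\mathrm{D}_f(\rho,\pi) = f'(\rho) - f'(\pi)$ into the general formula, and then invoke the previous proposition, which identified $J(\rho) = -\chi(\rho)\nabla(f'(\rho) - f'(\pi))$, to rewrite $\bar\grad \mathrm{D}_f(\rho,\pi) = -\nabla\cdot J(\rho)$. The only mildly delicate point is the uniqueness of $\Psi$ modulo constants, which relies on the standing assumption that $\Phi \mapsto \V_\Phi$ is an isomorphism $C^\infty(\Omega)/\mathbb{R} \to T_\rho\mathcal{P}_+$; everything else is a direct computation via integration by parts.
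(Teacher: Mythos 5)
Your proposal is correct and follows essentially the same route as the paper: the paper simply notes that the metric tensor is $(-\Delta_\chi)^{\dagger}$, so that $-\Delta_\chi^{\dagger}\bar\grad F=\frac{\delta}{\delta\rho}F$ and hence $\bar\grad F=-\Delta_\chi\frac{\delta}{\delta\rho}F$, then substitutes the first variation of $\mathrm{D}_f$ and invokes the Onsager gradient-flow identity, exactly as you do. Your version merely spells out the duality pairing, the integration by parts, and the mod-constants uniqueness that the paper leaves implicit.
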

\begin{proof}
The proof follows from the definition of the gradient operator in a Riemannian manifold. Since $(-\Delta_{\chi})^{\dd}$ is a Riemannian metric tensor in $(\mathcal{P}_+, \g)$, then for any function $\Phi\in C^{\infty}(\Omega)/\mathbb{R}$ with $\sigma=\V_\Phi=-\Delta_{\chi}\Phi$, we have 
\begin{equation}\label{Derive_a}
\g(\bar\grad F(\rho), \V_\Phi)=\int \Phi\cdot \bar\grad F(\rho) dx =\int \frac{\delta}{\delta\rho}\mathcal{F}(\rho)\cdot \V_\Phi dx. 
\end{equation}
We note that from the integration by parts, 
\begin{equation*}
\int \frac{\delta}{\delta\rho}\mathcal{F}(\rho)\cdot \V_\Phi dx=\int \frac{\delta}{\delta\rho}\mathcal{F}(\rho)\cdot (-\Delta_{\chi}\Phi) dx
=\int -\Delta_{\chi}\frac{\delta}{\delta\rho}\mathcal{F}(\rho)\cdot \Phi dx.
\end{equation*}
From equation {\eqref{Derive_a}}, we have 
\begin{equation*}
\int \Phi\cdot\left(\bar \grad F(\rho)+\Delta_{\chi}\frac{\delta}{\delta\rho}\mathcal{F}(\rho)\right) dx=0, 
\end{equation*}
for any {function} $\Phi\in C^{\infty}(\Omega)/\mathbb{R}$. Let $\Phi=\bar \grad F(\rho)+\Delta_{\chi}\frac{\delta}{\delta\rho}\mathcal{F}(\rho)$. We have 
\begin{equation*}
\int\left(\bar \grad F(\rho)+\Delta_{\chi}\frac{\delta}{\delta\rho}\mathcal{F}(\rho)\right)^2 dx=0.
\end{equation*}
Thus, 
 \begin{equation*}
\bar \grad F(\rho)=-\Delta_{\chi}\frac{\delta}{\delta \rho}F(\rho),\quad a.e.. 
\end{equation*}
If $F(\rho)=\mathrm{D}_{f}(p, \pi)$, using the formula of $\bar\grad F$ and equation \eqref{gradient}, we show equation \eqref{Onsager}.
\end{proof}
Again, equation \eqref{Onsager} explains the geometric representation in Onsager reciprocal relations. The R.H.S. of hydrodynamics is the steep descent direction of the free energy in $(\mathcal{P}_+, \g)$. In fact, one can define other geometric quantities that are useful in studying the dynamical behaviors of hydrodynamics. For example, one {may informally} define the arc length functional in $(\mathcal{P}_+, \g)$.  
\begin{definition}[Arc length functional]
For any curve $\gamma\in C^{1}([0, T];\sP_+)$, with $T>0$, the arc length $\bar L(\gamma):=L_{\g}(\gamma)$ of curve $\gamma$ in $(\sP_+, \g)$ is defined as   
\begin{equation}\label{L}
\bar L(\gamma):=\int_0^{T} \big|\int \partial_t\gamma(t)\cdot \left(-\Delta^{\dagger}_{\chi(\gamma(t))}\partial_t\gamma(t)\right) dx\big|^{\frac{1}{2}}dt. 
\end{equation}
In other words, let function $\Phi(t)\in C^{\infty}(\Omega)$, $t\in [0,T]$. Consider
\begin{equation*}
\partial_t\gamma(t)+\nabla\cdot(\chi(\gamma(t))\nabla\Phi(t))=0. 
\end{equation*}
Then
\begin{equation*}
\bar L(\gamma)=\int_0^{T} |\int (\nabla\Phi(t), \chi(\gamma(t))\nabla\Phi(t))dx|^{\frac{1}{2}}dt.
\end{equation*}
\end{definition}
One also {informally} formulates the minimal arc length problem between two density functions. The minimal value defines a Riemannian distance. Denote $\bar\Dist=\mathrm{Dist}_\g\colon \sP_+\times\sP_+\rightarrow\mathbb{R}_+$. 
\begin{definition}[Minimal arc length problems]
Given two points $\rho^0$, $\rho^1\in\mathcal{P}_+$. The minimal arc length problem in $(\sP_+, \g)$ satisfies the following optimization problem: 
\begin{equation*}
\bar{\Dist}(\rho^0,  \rho^1):=\inf_{\gamma\in C^{\infty}([0,1];\mathcal{P}_+)} \Big\{\int_0^1 \big|\int \partial_t\gamma(t)\cdot \Big(-\Delta^{\dagger}_{\chi(\gamma(t))}\partial_t\gamma(t)\Big) dx\big|^{\frac{1}{2}}dt\colon \gamma(0)=\rho^0,~~\gamma(1)=\rho^1\Big\},
\end{equation*}
where $\bar L$ is the arc length function defined in \eqref{L} and the minimal is over all smooth curves $\gamma(t)\in\mathcal{P}_+$, $t\in [0,1]$, connecting initial and terminal time densities $\gamma(0)=\rho^0$, $\gamma(1)=\rho^1$. 
\end{definition}
The other equivalent of minimal arc length is written as the minimization of the squared norm in $(\mathcal{P}_+, \g)$, known as the least action problem.  
\begin{equation*}
\begin{split}
&\bar{\Dist}(\rho^0,  \rho^1)^2\\=&\inf_{\gamma\in C^{\infty}([0,1];\mathcal{P}_+)} \Big\{\int_0^1 \int \partial_t\gamma(t)\cdot \Big(-\Delta^{\dagger}_{\chi(\gamma(t))}\partial_t\gamma(t)\Big) dxdt\colon \gamma(0)=\rho^0,~~\gamma(1)=\rho^1\Big\}\\
=&\inf_{\gamma\in C^{\infty}([0,1];\mathcal{P}_+)} \Big\{\int_0^1 \int \Big(\nabla\Phi(t), \chi(\gamma(t))\nabla\Phi(t)\Big) dxdt\colon\\
&\hspace{2.5cm} \partial_t\gamma(t)+\nabla\cdot\big(\chi(\gamma(t))\nabla\Phi(t)\big)=0, ~~\gamma(0)=\rho^0,~~\gamma(1)=\rho^1\Big\}. 
\end{split}
\end{equation*}
In the above least action problem, we solve the function $\Phi(t)$ from the continuity equation $\partial_t\gamma(t)=-\Delta_{\chi(\gamma(t))}\Phi(t)$, and then take the infimum over all paths $\gamma(t)$, $t\in [0,1]$ connecting densities $\rho^0$, $\rho^1$. We note that the minimal arc length value $\bar\Dist$ represents the Wasserstein-2 type distance on the density space; see \cite{C1,LiG2}. In particular, if $\chi(\rho)=\rho\mathbb{I}$, the distance functional $\bar\Dist$ defined the classical Wasserstein-2 distance \cite{am2006,vil2008}. If $\chi$ is not a linear function of $\rho$, the distance functional will introduce a class of Wasserstein-2 type distances. 
\begin{remark}{We refer that the existence and well posedness of geodesics in Wasserstein-2 type space has been studied in \cite{C1, O1}. 
E.g., assume the nonlinear mobility matrix function $\chi(\rho) = m(\rho)\mathbb{I}$, $m(\rho)\geq 0$ is a smooth function.
The existence of geodesics between two densities depends on the concavity of $m(\rho)$, i.e., $m''(\rho)\leq 0$. 
In this paper, we present an informal derivation of a geodesic equation, in which we assume that the initial value geodesic exists for a short time.}
\end{remark}
\section{Geometric calculations in hydrodynamical density manifolds}\label{sec3}
In this section, we are ready to present the main result of this paper. We derive Levi--Civita connections, parallel transport, geodesic equations, and curvature tensors in the hydrodynamical density manifold $(\mathcal{P}_+, \g)$. This computation extends the ones in the Wasserstein-2 density manifold \cite{Lafferty, Lott}.
\subsection{Levi--Civita connection, parallel transport and geodesics} 
For simplicity of discussion, we {use} the Eulerian coordinates in fluid mechanics to represent geometric quantities in density manifolds. {The following definitions are needed, which are mobility induced {\em carr{\'e} du champ} or Gamma one operators.} 
\begin{definition}[Gamma one operators {with mobilities}]
 Denote $\rho\in \mathcal{P}_+$, and $\Phi_1$, $\Phi_2\in C^{\infty}(\Omega)$. Define $\Gamma_{\chi}\colon \sP_+\times C^{\infty}(\Omega)\times C^{\infty}(\Omega)\rightarrow C^{\infty}(\Omega)$, such that  
\begin{equation*}
\Gamma_{\chi}(\Phi_1, \Phi_2):= (\nabla\Phi_1, \chi(\rho)\nabla\Phi_2)=\sum_{i,j=1}^d\frac{\partial}{\partial x_i}\Phi_1\frac{\partial}{\partial x_j}\Phi_2\chi_{ij}(\rho). 
\end{equation*}
Define $\Gamma_{\chi'}\colon \sP_+\times C^{\infty}(\Omega)\times C^{\infty}(\Omega)\rightarrow C^{\infty}(\Omega)$, such that 
\begin{equation*}
\Gamma_{\chi'}(\Phi_1, \Phi_2):= (\nabla\Phi_1, \chi'(\rho)\nabla\Phi_2)=\sum_{i,j=1}^d\frac{\partial}{\partial x_i}\Phi_1\frac{\partial}{\partial x_j}\Phi_2\chi'_{ij}(\rho), 
\end{equation*}
where $\chi_{ij}'(\rho)=\frac{\partial}{\partial \rho}\chi_{ij}(\rho)$. Define $\Gamma_{\chi''}\colon \sP_+\times C^{\infty}(\Omega)\times C^{\infty}(\Omega)\rightarrow C^{\infty}(\Omega)$, such that  
\begin{equation*}
\Gamma_{\chi''}(\Phi_1, \Phi_2):= (\nabla\Phi_1, \chi''(\rho)\nabla\Phi_2)=\sum_{i,j=1}^d\frac{\partial}{\partial x_i}\Phi_1\frac{\partial}{\partial x_j}\Phi_2\chi''_{ij}(\rho),
\end{equation*}
where $\chi_{ij}''(\rho)=\frac{\partial^2}{\partial \rho^2}\chi_{ij}(\rho)$. 
\end{definition}
\begin{remark}
In particular, let $\chi=\mathbb{I}$, where $\mathbb{I}\in\mathbb{R}^{d\times d}$ is an identity matrix. Denote  
\begin{equation*}
\Gamma_{1}(\Phi_1, \Phi_2):=\Gamma_{\chi}(\Phi_1,\Phi_2)= (\nabla\Phi_1, \nabla\Phi_2)=\sum_{i=1}^d\frac{\partial}{\partial x_i}\Phi_1\frac{\partial}{\partial x_i}\Phi_2.  
\end{equation*}
Here, the $\Gamma_1$ operator is the {{\em carr{\'e} du champ} or} Gamma one operator in Euclidean space, which was firstly studied in \cite{BE}. Our geometric calculations are built on generalized Gamma one operators; see previous work in \cite{LiG1}.  
\end{remark}

\begin{definition}[Directional derivatives]\label{DR}
Given a function $\Phi\in C^{\infty}(\Omega)$, denote a vector field $\V_{\Phi}=-\Delta_{\chi}\Phi\in {T}_ \rho\mathcal{P}_+$. Denote an energy functional $F\in C^{\infty}(\mathcal{P}_+; \mathbb{R})$. Write the direction derivative of $F$ at direction $\V_\Phi$ as 
\begin{equation*}
\begin{split}
(\V_{\Phi}F)(\rho):=&\frac{d}{d\epsilon}|_{\epsilon=0}F(\rho-\epsilon \Delta_{\chi}\Phi)=-\int \frac{\delta}{\delta \rho} F(\rho) \nabla\cdot(\chi(\rho)\nabla\Phi) dx\\
=&\int (\nabla\frac{\delta}{\delta \rho} F(\rho), \chi(\rho)\nabla\Phi)dx\\
=&\int \Gamma_{\chi}(\frac{\delta}{\delta \rho} F(\rho), \Phi)dx. 
\end{split}
\end{equation*}
We also denote the first order directional derivative of mobility matrix function $\chi$ at direction $\V_\Phi$ as $\V_\Phi\chi=\V_\Phi\chi(\rho)=((\V_\Phi\chi(\rho))_{ij})_{1\leq i,j\leq d}\in C^{\infty}(\Omega; \mathbb{R}^{d\times d})$, such that
\begin{equation*}
(\V_\Phi\chi(\rho))_{ij}:=-\nabla\cdot(\chi(\rho)\nabla\Phi)\chi'_{ij}(\rho).
\end{equation*}
\end{definition}

We first compute commutators of two vector fields in $(\mathcal{P}_+, \g)$. Denote the commutator $[\cdot, \cdot]\colon T_\rho\sP_+\times T_\rho\sP_+\rightarrow T_\rho\sP_+$. {For an energy functional $F\in C^{\infty}(\mathcal{P}_+; \mathbb{R})$, we define 
\begin{equation*}
([\V_{\Phi_1}, \V_{\Phi_2}])F(\rho):=(\V_{\Phi_1}(\V_{\Phi_2}F))(\rho)-(\V_{\Phi_2}(\V_{\Phi_1}F))(\rho).
\end{equation*}}
\begin{lemma}\label{communtator}
Given functions $\Phi_1$, $\Phi_2\in C^{\infty}(\Omega)$ and a functional $F\in C^{\infty}(\mathcal{P}_+, \mathbb{R})$, the commutator $[\V_{\Phi_1}, \V_{\Phi_2}]$ in $(\sP_+, \g)$ satisfies 
\begin{equation}\label{comm}
[\V_{\Phi_1}, \V_{\Phi_2}]F(\rho)= \int \Gamma_{\chi}(\Gamma_{\chi'}(\frac{\delta}{\delta\rho}F(\rho), \Phi_2), \Phi_1)- \Gamma_{\chi}(\Gamma_{\chi'}(\frac{\delta}{\delta\rho}F(\rho), \Phi_1), \Phi_2) dx.
\end{equation}
Equivalently, 
\begin{equation}\label{comm_ex}
[\V_{\Phi_1}, \V_{\Phi_2}]=-\Delta_{\V_{\Phi_1}\chi}\Phi_2+\Delta_{\V_{\Phi_2}\chi}\Phi_1, 
\end{equation}
where we denote 
\begin{equation*}
\Delta_{\V_{\Phi_1}\chi}\Phi_2:=-\nabla\cdot\Big( \nabla\cdot(\chi(\rho)\nabla\Phi_1)\chi'(\rho)\nabla\Phi_2\Big).
\end{equation*}
\end{lemma}
\begin{proof}
We note that 
\begin{equation}\label{lem1_int}
\begin{split}
([\V_{\Phi_1}, \V_{\Phi_2}])F(\rho)=&(\V_{\Phi_1}(\V_{\Phi_2}F))(\rho)-(\V_{\Phi_2}(\V_{\Phi_1}F))(\rho)\\
=&\frac{d}{d\epsilon}|_{\epsilon=0} (\V_{\Phi_2}F)( \rho-\epsilon\Delta_{\chi}\Phi_1)-\frac{d}{d\epsilon}|_{\epsilon=0} (\V_{\Phi_1}F)(\rho-\epsilon_2 \Delta_{\chi}\Phi_2)  \\
=&\int \Gamma_{\chi}(\frac{\delta}{\delta \rho} \V_{\Phi_2}F(\rho), \Phi_1)dx-\int \Gamma_{\chi}(\frac{\delta}{\delta \rho} \V_{\Phi_1}F(\rho), \Phi_2)dx. 
\end{split}
\end{equation}
From Definition \ref{DR}, we have
\begin{equation*}
\begin{split}
\frac{\delta}{\delta \rho} \V_{\Phi_2}F(\rho)(x)=&-\frac{\delta}{\delta \rho}\int \frac{\delta}{\delta \rho} F(\rho)(x)\nabla\cdot(\chi(\rho(x))\nabla\Phi_2(x)) dx\\
=& -\int \frac{\delta^2}{\delta \rho^2} F(\rho)(x,y) \nabla\cdot(\chi(\rho(y))\nabla\Phi_2(y)) dy+ (\nabla\frac{\delta}{\delta \rho} F(\rho)(x), \chi'(\rho(x))\nabla\Phi_2(x))\\
=& -\int \frac{\delta^2}{\delta \rho^2} F(\rho)(x,y) \Delta_{\chi}\Phi_2(y) dy+ \Gamma_{\chi'}(\frac{\delta}{\delta \rho} F(\rho), \Phi_2)(x).
\end{split}
\end{equation*}
Thus
\begin{equation*}
\begin{split}
\int \Gamma_{\chi}(\frac{\delta}{\delta \rho} \V_{\Phi_2}F(\rho), \Phi_1)dx=&\int \int  \frac{\delta^2}{\delta \rho^2} F(\rho)(x,y)\cdot\Delta_{\chi}\Phi_1(x) \Delta_{\chi}\Phi_2(y) dxdy\\
&+ \int \Gamma_{\chi}(\Gamma_{\chi'}(\frac{\delta}{\delta \rho} F(\rho), \Phi_2), \Phi_1) dx, 
\end{split}
\end{equation*}
where $\frac{\delta^2}{\delta\rho^2}F(\rho)(x,y)$ is the $L^2$ second variation operator of function $F(\rho)$ at a point $(x,y)\in \Omega \times \Omega$. Switching index $1$ and $2$, we can compute the term $\int \Gamma_{\chi}(\frac{\delta}{\delta \rho} \V_{\Phi_1}F(\rho), \Phi_2)dx$. Combing above terms into equation \eqref{lem1_int}, we have
\begin{equation*}
\begin{split}
&([\V_{\Phi_1}, \V_{\Phi_2}])F(\rho)\\=&\quad\int \int  \frac{\delta^2}{\delta \rho^2} F(\rho)(x,y)\cdot\Delta_{\chi}\Phi_1(x) \Delta_{\chi}\Phi_2(y) dxdy+ \int \Gamma_{\chi}(\Gamma_{\chi'}(\frac{\delta}{\delta \rho} F(\rho), \Phi_2), \Phi_1) dx\\
&-\int \int  \frac{\delta^2}{\delta \rho^2} F(\rho)(x,y)\cdot\Delta_{\chi}\Phi_1(y) \Delta_{\chi}\Phi_2(x) dxdy- \int \Gamma_{\chi}(\Gamma_{\chi'}(\frac{\delta}{\delta \rho} F(\rho), \Phi_1), \Phi_2) dx\\
=& \int \Gamma_{\chi}(\Gamma_{\chi'}(\frac{\delta}{\delta \rho} F(\rho), \Phi_2), \Phi_1) dx- \int \Gamma_{\chi}(\Gamma_{\chi'}(\frac{\delta}{\delta \rho} F(\rho), \Phi_1), \Phi_2) dx, 
\end{split}
\end{equation*}
where the last equality uses the fact that the second variational operator is symmetric, i.e. $\frac{\delta^2}{\delta\rho^2}F(\rho)(x,y)=\frac{\delta^2}{\delta\rho^2}F(\rho)(y,x)$, for any $x,y\in \Omega$. We finish the proof by using integration by parts twice in the above formula.  
\end{proof}

We next compute the Levi--Civita connection in $(\mathcal{P}_+, \g)$. Denote the Levi--Civita connection as $\bar \nabla\colon \sP_+\times T_\rho\sP_+\times T_\rho\sP_+\rightarrow T_\rho\sP_+$. 
\begin{lemma}
Given functions $\Phi_1$, $\Phi_2\in C^{\infty}(\Omega)$, the Levi--Civita connection $\bar \nabla$ in $(\sP_+, \g)$ satisfies
\begin{equation}\label{V12}
\bar\nabla_{\V_{\Phi_1}}\V_{\Phi_2}:=-\frac{1}{2}\Big\{\Delta_{{{\V_{\Phi_1}}\chi}}\Phi_2-\Delta_{{{\V_{\Phi_2}}\chi}}\Phi_1+\Delta_{\chi}\Gamma_{\chi'}(\Phi_1,\Phi_2)\Big\}.
\end{equation}
\end{lemma}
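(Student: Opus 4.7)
The plan is to derive \eqref{V12} from the Koszul formula for the Levi-Civita connection. Pairing against an arbitrary third test vector field $\V_{\Phi_3}\in T_\rho\mathcal{P}_+$, one has
\begin{equation*}
\begin{split}
2\g(\bar\nabla_{\V_{\Phi_1}}\V_{\Phi_2},\V_{\Phi_3}) =~&\V_{\Phi_1}\g(\V_{\Phi_2},\V_{\Phi_3}) + \V_{\Phi_2}\g(\V_{\Phi_1},\V_{\Phi_3}) - \V_{\Phi_3}\g(\V_{\Phi_1},\V_{\Phi_2})\\
&+ \g([\V_{\Phi_1},\V_{\Phi_2}],\V_{\Phi_3}) - \g([\V_{\Phi_1},\V_{\Phi_3}],\V_{\Phi_2}) - \g([\V_{\Phi_2},\V_{\Phi_3}],\V_{\Phi_1}).
\end{split}
\end{equation*}
The goal is to evaluate the right-hand side as an integral against $\Phi_3$, and then to recognize the left-hand side in the form $\int_\Omega W\,\Phi_3\,dx$ with $W=\bar\nabla_{\V_{\Phi_1}}\V_{\Phi_2}\in T_\rho\mathcal{P}_+$; nondegeneracy of the pairing (equivalently, the assumed isomorphism $\Phi\mapsto\V_\Phi$) will then force \eqref{V12}.

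First I would handle the three directional derivatives. Using $\g(\V_{\Phi_i},\V_{\Phi_j})=\int_\Omega\Gamma_\chi(\Phi_i,\Phi_j)\,dx$ and noting that the test functions $\Phi_i$ are held fixed on $\Omega$ while only $\chi(\rho)$ depends on $\rho$, Definition \ref{DR} yields
\begin{equation*}
\V_{\Phi_k}\g(\V_{\Phi_i},\V_{\Phi_j}) = \int_\Omega(-\Delta_\chi\Phi_k)\,\Gamma_{\chi'}(\Phi_i,\Phi_j)\,dx.
\end{equation*}
For the three bracket terms, I would substitute the commutator identity \eqref{comm_ex} from the preceding lemma and integrate by parts so that each contribution is expressed as an integral whose "test slot" is occupied by $\Phi_3$.

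Collecting everything, the derivative contributions produce integrands pairing $\chi'$ with two indices drawn from $\{\Phi_1,\Phi_2,\Phi_3\}$ in symmetric orderings, while the commutator contributions produce the complementary pairings. Symmetry of the $\Gamma_{\chi'}$ bilinear form forces most terms to cancel pairwise, leaving
\begin{equation*}
2\g(\bar\nabla_{\V_{\Phi_1}}\V_{\Phi_2},\V_{\Phi_3}) = -\int_\Omega\Big\{\Delta_{\V_{\Phi_1}\chi}\Phi_2 - \Delta_{\V_{\Phi_2}\chi}\Phi_1 + \Delta_\chi\Gamma_{\chi'}(\Phi_1,\Phi_2)\Big\}\Phi_3\,dx,
\end{equation*}
and the bracketed expression already lies in $T_\rho\mathcal{P}_+$, since it is a sum of divergences applied to smooth functions. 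The main obstacle is precisely this symmetric bookkeeping: one must verify that $\Delta_\chi\Gamma_{\chi'}(\Phi_1,\Phi_2)$ emerges correctly from the symmetrized combination $\V_{\Phi_1}\g(\V_{\Phi_2},\V_{\Phi_3})+\V_{\Phi_2}\g(\V_{\Phi_1},\V_{\Phi_3})-\V_{\Phi_3}\g(\V_{\Phi_1},\V_{\Phi_2})$ after a second integration by parts moving $\Gamma_{\chi'}(\Phi_1,\Phi_2)$ off $\Phi_3$, and that the three commutator terms collapse cleanly into $\Delta_{\V_{\Phi_1}\chi}\Phi_2-\Delta_{\V_{\Phi_2}\chi}\Phi_1$ with no residual $\chi'$-cross terms surviving.
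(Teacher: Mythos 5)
Your proposal is correct and follows essentially the same route as the paper: apply the Koszul formula against a test direction $\V_{\Phi_3}$, compute the three metric-derivative terms via $\V_{\Phi_k}\g(\V_{\Phi_i},\V_{\Phi_j})=\int_\Omega(-\Delta_\chi\Phi_k)\Gamma_{\chi'}(\Phi_i,\Phi_j)\,dx$, substitute the commutator identity \eqref{comm_ex}, integrate by parts to move everything onto the $\Phi_3$ slot, and conclude by nondegeneracy of the pairing. The cancellation bookkeeping you flag as the main obstacle does work out exactly as you state (the surviving combination is $\Delta_{\V_{\Phi_1}\chi}\Phi_2-\Delta_{\V_{\Phi_2}\chi}\Phi_1+\Delta_\chi\Gamma_{\chi'}(\Phi_1,\Phi_2)$, the last term arising from $\V_{\Phi_3}\g(\V_{\Phi_1},\V_{\Phi_2})$ after a double integration by parts), matching the paper's computation.
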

\begin{proof}
We note that $\langle \V_{\Phi_2}, \V_{\Phi_3}\rangle(\rho)=-\int \Phi_2\cdot\Delta_{\chi}\Phi_3 dx$. Thus 
\begin{equation}\label{V}
\begin{split}
\V_{\Phi_1}\langle \V_{\Phi_2}, \V_{\Phi_3}\rangle(\rho)
=& -\int \Phi_2\cdot\Delta_{\V_{\Phi_1}\chi}\Phi_3 dx.
\end{split}
\end{equation}
From the definition of the Levi--Civita connection by the Koszul formula \cite{Lott}, we have
\begin{equation}\label{V11}
\begin{split}
2\langle\bar\nabla_{\V_{\Phi_1}}\V_{\Phi_2}, \V_{\Phi_3}\rangle=&\quad \V_{\Phi_1}\langle \V_{\Phi_2}, \V_{\Phi_3}\rangle +\V_{\Phi_2}\langle \V_{\Phi_3}, \V_{\Phi_1}\rangle -\V_{\Phi_3}\langle \V_{\Phi_1}, \V_{\Phi_2}\rangle \\
&+\langle \V_{\Phi_3}, [\V_{\Phi_1}, \V_{\Phi_2}]\rangle-  \langle \V_{\Phi_2}, [\V_{\Phi_1}, \V_{\Phi_3}]\rangle-\langle \V_{\Phi_1}, [\V_{\Phi_2}, \V_{\Phi_3}]\rangle.
\end{split}
\end{equation}
Substituting equations \eqref{comm_ex}, \eqref{V} into \eqref{V11}, we obtain 
\begin{equation}\label{V1}
\begin{split}
&-2\langle\bar\nabla_{\V_{\Phi_1}}\V_{\Phi_2}, \V_{\Phi_3}\rangle\\=&\quad\int \Phi_2\cdot\Delta_{\V_{\Phi_1}\chi}\Phi_3dx+\int \Phi_1\cdot \Delta_{\V_{\Phi_2}\chi}\Phi_3dx -\int\Phi_1\cdot\Delta_{\V_{\Phi_3}\chi}\Phi_2 dx \\
&+\int \Phi_3\cdot\Delta_{\V_{\Phi_1}\chi}\Phi_2 dx-\int \Phi_3\cdot\Delta_{\V_{\Phi_2}\chi}\Phi_1 dx \\ 
&-\int \Phi_2\cdot\Delta_{\V_{\Phi_1}\chi}\Phi_3 dx+\int \Phi_2\cdot\Delta_{\V_{\Phi_3}\chi}\Phi_1 dx-\int \Phi_1\cdot\Delta_{\V_{\Phi_2}\chi}\Phi_3 dx+\int \Phi_1\cdot\Delta_{\V_{\Phi_3}\chi}\Phi_2 dx.
\end{split}
\end{equation}
 We use the fact that $\int \Phi_i \cdot\Delta_{\V_{\Phi_k}\chi} \Phi_j dx= \int \Phi_j\cdot\Delta_{\V_{\Phi_k\chi}} \Phi_i dx$ from the intergation by parts, for $i,j,k\in\{1,2,3\}$. By {cancelling} the equivalent terms, {such as
$\int \Phi_2\cdot\Delta_{\V_{\Phi_1}\chi}\Phi_3dx$, $\int \Phi_1\cdot \Delta_{\V_{\Phi_2}\chi}\Phi_3dx$, $\int\Phi_1\cdot\Delta_{\V_{\Phi_3}\chi}\Phi_2 dx$}, in \eqref{V1}, we derive
\begin{equation}\label{a}
\begin{split}
&\langle \bar\nabla_{\V_{\Phi_1}}\V_{\Phi_2}, \V_{\Phi_3}\rangle\\
=&-\frac{1}{2}\Big(\int \Phi_3\cdot\Delta_{\V_{\Phi_1}\chi}\Phi_2 dx- \int \Phi_3\cdot\Delta_{\V_{\Phi_2}\chi}\Phi_1 dx\Big)-\frac{1}{2}\int \Phi_1\cdot\Delta_{\V_{\Phi_3}\chi}\Phi_2 dx.
\end{split}
\end{equation}
Clearly, one can {rewrite} equation \eqref{a} as follows. Using the integration by parts twice, we have 
\begin{equation}\label{claim}
\begin{split}
\int \Phi_1\cdot\Delta_{\V_{\Phi_3}\chi}\Phi_2 dx
=& \int (\nabla\Phi_1, \chi'\nabla\Phi_2)\cdot\Delta_\chi\Phi_3 dx\\
=&\int \Phi_3\cdot\Delta_{\chi}\Gamma_{\chi'}(\Phi_1, \Phi_2) dx.
\end{split}
\end{equation}
Substituting equation \eqref{claim} into \eqref{a}, we derive equation \eqref{V12}. 
The formula \eqref{claim} can be verified as below:
\begin{equation*}
\begin{split}
\int \Phi_1\cdot\Delta_{\V_{\Phi_3}\chi}\Phi_2 dx
=&\int \Phi_1 \nabla\cdot(\V_{\Phi_3}\chi \nabla\Phi_2)dx\\
=&-\int (\nabla\Phi_1,  \V_{\Phi_3}\chi\nabla\Phi_2) dx\\
=&\int (\nabla\Phi_1,  \chi'(\rho)\nabla\Phi_2) \nabla\cdot(\chi(\rho) \nabla\Phi_3) dx\\
=&\int \Phi_3 \nabla\cdot\Big(\chi(\rho)\nabla ((\nabla\Phi_1, \chi'(\rho)\nabla\Phi_2) )  \Big)dx\\
=&\int\Phi_3\cdot\Delta_{\chi}\Gamma_{\chi'}(\Phi_1,\Phi_2)  dx. 
\end{split}
\end{equation*}
Combining equations \eqref{a} and \eqref{claim}, we obtain
\begin{equation}\label{newclaim}
\begin{split}
&\langle \bar\nabla_{\V_{\Phi_1}}\V_{\Phi_2}, \V_{\Phi_3}\rangle\\
=&-\frac{1}{2}\Big(\int \Phi_3\cdot \Delta_{\V_{\Phi_1}\chi}\Phi_2 dx- \int \Phi_3\cdot \Delta_{\V_{\Phi_2}\chi}\Phi_1 dx\Big)-\frac{1}{2}\int \Phi_3\cdot\Delta_{\chi}\Gamma_{\chi'}(\Phi_1,\Phi_2) dx\\
=&\frac{1}{2}\int \Phi_3\cdot\Big\{-\Delta_{\V_{\Phi_1}\chi}\Phi_2+ \Delta_{\V_{\Phi_2}\chi}\Phi_1-\Delta_{\chi}\Gamma_{\chi'}(\Phi_1,\Phi_2)\Big\}dx. 
\end{split}
\end{equation}
This finishes the proof. 
 \end{proof}
\begin{lemma}\label{lemma5}
The following equality holds: 
\begin{equation*}
\bar\nabla_{\V_{\Phi_1}}\V_{\Phi_2}+\bar\nabla_{\V_{\Phi_2}}\V_{\Phi_1}=\V_{\Gamma_{\chi'}(\Phi_1, \Phi_2)}.
\end{equation*}
\end{lemma}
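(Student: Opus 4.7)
The plan is to exploit the explicit formula \eqref{V12} for the Levi-Civita connection established in the previous lemma and verify the identity by direct symmetrization. Concretely, by \eqref{V12},
\begin{equation*}
\bar\nabla_{\V_{\Phi_1}}\V_{\Phi_2}=-\tfrac{1}{2}\bigl\{\Delta_{\V_{\Phi_1}\chi}\Phi_2-\Delta_{\V_{\Phi_2}\chi}\Phi_1+\Delta_{\chi}\Gamma_{\chi'}(\Phi_1,\Phi_2)\bigr\},
\end{equation*}
and swapping the roles of $\Phi_1$ and $\Phi_2$ gives the analogous expression for $\bar\nabla_{\V_{\Phi_2}}\V_{\Phi_1}$.

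The first step is to add these two expressions. The terms $\Delta_{\V_{\Phi_1}\chi}\Phi_2$ and $\Delta_{\V_{\Phi_2}\chi}\Phi_1$ occur with opposite signs in the two lines and therefore cancel pairwise. The second step is to use the symmetry of the operator $\Gamma_{\chi'}$ in its two arguments, which is evident from its definition as $(\nabla\Phi_1,\chi'(\rho)\nabla\Phi_2)$, so that $\Gamma_{\chi'}(\Phi_1,\Phi_2)=\Gamma_{\chi'}(\Phi_2,\Phi_1)$. After cancellation, only the last term survives and doubles, yielding
\begin{equation*}
\bar\nabla_{\V_{\Phi_1}}\V_{\Phi_2}+\bar\nabla_{\V_{\Phi_2}}\V_{\Phi_1}=-\Delta_{\chi}\Gamma_{\chi'}(\Phi_1,\Phi_2).
\end{equation*}

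The final step is to recognize the right-hand side as a tangent vector of the required form. By the definition $\V_{\Phi}:=-\Delta_{\chi}\Phi$ applied with $\Phi=\Gamma_{\chi'}(\Phi_1,\Phi_2)\in C^{\infty}(\Omega)$, the quantity $-\Delta_{\chi}\Gamma_{\chi'}(\Phi_1,\Phi_2)$ equals $\V_{\Gamma_{\chi'}(\Phi_1,\Phi_2)}$, which completes the identification. There is no substantive obstacle in this argument; the only potential pitfall is a sign error in applying \eqref{V12}, so I would double-check signs by testing against the special case $\chi(\rho)=\rho$, where $\chi'\equiv\mathbb{I}$ and the identity should reduce to the well-known Wasserstein-$2$ relation $\bar\nabla_{\V_{\Phi_1}}\V_{\Phi_2}+\bar\nabla_{\V_{\Phi_2}}\V_{\Phi_1}=-\nabla\cdot(\rho\nabla(\nabla\Phi_1,\nabla\Phi_2))$ of \cite{Lott}.
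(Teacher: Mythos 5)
Your proposal is correct and follows essentially the same route as the paper: the paper also adds the formula \eqref{V12} to its index-swapped counterpart, lets the antisymmetric terms $\Delta_{\V_{\Phi_1}\chi}\Phi_2$ and $\Delta_{\V_{\Phi_2}\chi}\Phi_1$ cancel, and identifies the surviving term $-\Delta_{\chi}\Gamma_{\chi'}(\Phi_1,\Phi_2)$ with $\V_{\Gamma_{\chi'}(\Phi_1,\Phi_2)}$ by the definition of $\V_\Phi$. Your version simply spells out the symmetry of $\Gamma_{\chi'}$ and the sign bookkeeping more explicitly than the paper's one-line argument.
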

\begin{proof}
Since \eqref{V12} holds and $$\bar\nabla_{\V_{\Phi_1}}\V_{\Phi_2}+ \bar\nabla_{\V_{\Phi_2}}\V_{\Phi_1}=-\Delta_{\chi}\Gamma_{\chi'}(\Phi_1, \Phi_2)=\V_{\Gamma_{\chi'}(\Phi_1, \Phi_2)},$$
then we prove the result.
\end{proof}
\begin{lemma}\label{lemma4}
The Levi--Civita connection in $(\sP_+, \g)$ is given as below. For functions $\Phi_1$, $\Phi_2$, $\Phi_3\in C^{\infty}(\Omega)$,   
\begin{equation}\label{LC}
\begin{split}
&\langle \bar\nabla_{\V_{\Phi_1}}\V_{\Phi_2}, \V_{\Phi_3}\rangle\\
=&\frac{1}{2}\int\Big\{\Gamma_{\chi}(\Gamma_{\chi'}(\Phi_2, \Phi_3), \Phi_1)-\Gamma_{\chi}(\Gamma_{\chi'}(\Phi_1, \Phi_3), \Phi_2)
+\Gamma_{\chi}(\Gamma_{\chi'}(\Phi_1, \Phi_2), \Phi_3)\Big\}dx.
\end{split}
\end{equation}
\end{lemma}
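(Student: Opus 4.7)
The plan is to rewrite the identity \eqref{newclaim} — established en route to the formula \eqref{V12} — in terms of the $\Gamma_\chi$ and $\Gamma_{\chi'}$ operators. Recall that \eqref{newclaim} reads
\[
\langle \bar\nabla_{\V_{\Phi_1}}\V_{\Phi_2}, \V_{\Phi_3}\rangle = \frac{1}{2}\int_\Omega \Phi_3\Big\{-\Delta_{\V_{\Phi_1}\chi}\Phi_2 + \Delta_{\V_{\Phi_2}\chi}\Phi_1 - \Delta_{\chi}\Gamma_{\chi'}(\Phi_1,\Phi_2)\Big\}dx,
\]
so everything reduces to pairing each of the three scalar quantities in the brace with $\Phi_3$ and performing integration by parts on $\Omega$ to produce a nested $\Gamma_\chi(\Gamma_{\chi'}(\cdot,\cdot),\cdot)$ expression.

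For the third term, a single integration by parts gives
\[
\int_\Omega \Phi_3\,\Delta_{\chi}\Gamma_{\chi'}(\Phi_1,\Phi_2)\,dx = -\int_\Omega \bigl(\nabla\Phi_3,\chi\nabla\Gamma_{\chi'}(\Phi_1,\Phi_2)\bigr)\,dx = -\int_\Omega \Gamma_\chi\bigl(\Phi_3,\Gamma_{\chi'}(\Phi_1,\Phi_2)\bigr)\,dx.
\]
For the first term, I would unfold the definition from Definition \ref{DR}, rewriting $\Delta_{\V_{\Phi_1}\chi}\Phi_2 = -\nabla\cdot\bigl((\Delta_\chi\Phi_1)\,\chi'(\rho)\nabla\Phi_2\bigr)$. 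A first integration by parts strips off the outer divergence and collapses $(\nabla\Phi_3,\chi'\nabla\Phi_2)$ into $\Gamma_{\chi'}(\Phi_3,\Phi_2)$, leaving $\int_\Omega (\Delta_\chi\Phi_1)\,\Gamma_{\chi'}(\Phi_3,\Phi_2)\,dx$; a second integration by parts moves $\Delta_\chi$ off $\Phi_1$ and onto $\Gamma_{\chi'}(\Phi_3,\Phi_2)$, producing $-\int_\Omega \Gamma_\chi\bigl(\Phi_1,\Gamma_{\chi'}(\Phi_3,\Phi_2)\bigr)\,dx$. This is essentially the manipulation already carried out in \eqref{claim}, so I would simply reuse that calculation with relabelled indices. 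The second term is handled identically, with the roles of $\Phi_1$ and $\Phi_2$ exchanged.

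Assembling the three pieces and using the symmetry of the matrix $\chi(\rho)$, which makes $\Gamma_\chi(\cdot,\cdot)$ symmetric in its two arguments so that $\Gamma_\chi\bigl(\Phi_i,\Gamma_{\chi'}(\Phi_j,\Phi_k)\bigr) = \Gamma_\chi\bigl(\Gamma_{\chi'}(\Phi_j,\Phi_k),\Phi_i\bigr)$, yields the claimed formula \eqref{LC}. There is no genuine obstacle: the lemma is a $\Gamma$-operator rewriting of \eqref{newclaim}. The only delicate point is sign bookkeeping — each integration by parts contributes a minus sign that must be reconciled with the minus signs already built into the definitions of $\Delta_\chi$, $\V_{\Phi_k}\chi$, and $\Delta_{\V_{\Phi_k}\chi}$ — so I would keep a short running tally on the side before writing down the final identity.
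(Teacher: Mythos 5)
Your proposal is correct and follows essentially the same route as the paper: the paper's proof of this lemma likewise starts from \eqref{newclaim} and converts each of the three terms $-\Delta_{\V_{\Phi_1}\chi}\Phi_2$, $+\Delta_{\V_{\Phi_2}\chi}\Phi_1$, $-\Delta_{\chi}\Gamma_{\chi'}(\Phi_1,\Phi_2)$ paired against $\Phi_3$ into nested $\Gamma_{\chi}(\Gamma_{\chi'}(\cdot,\cdot),\cdot)$ integrals by exactly the two-step (respectively one-step) integration by parts you describe, obtaining $\int_\Omega\Gamma_{\chi}(\Gamma_{\chi'}(\Phi_2,\Phi_3),\Phi_1)dx$, $-\int_\Omega\Gamma_{\chi}(\Gamma_{\chi'}(\Phi_1,\Phi_3),\Phi_2)dx$, and $\int_\Omega\Gamma_{\chi}(\Gamma_{\chi'}(\Phi_1,\Phi_2),\Phi_3)dx$. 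Your sign bookkeeping is consistent with the paper's conventions for $\Delta_\chi$, $\V_{\Phi}\chi$, and $\Delta_{\V_{\Phi}\chi}$, so no gap remains.
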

\begin{proof}
The Levi--Civita connection \eqref{LC} is derived from equation \eqref{newclaim}. Firstly, we compute 
\begin{equation*}
\begin{split}
-\int \Phi_3\cdot\Delta_{\V_{\Phi_1}\chi}\Phi_2dx=&\int \Phi_3\nabla\cdot\Big(\chi'(\rho)\nabla\cdot(\chi(\rho)\nabla\Phi_1)\nabla\Phi_2\Big)dx\\
=&-\int (\nabla\Phi_2, \chi'(\rho)\nabla \Phi_3)\nabla\cdot(\chi(\rho)\nabla\Phi_1)dx\\
=&\int \Big(\nabla[(\nabla\Phi_2, \chi'(\rho)\nabla \Phi_3)], \chi(\rho)\nabla \Phi_1\Big)dx\\
=&\int \Gamma_{\chi}(\Gamma_{\chi'}(\Phi_2, \Phi_3), \Phi_1) dx. 
\end{split}
\end{equation*}
Secondly, we switch indices $1$ and $2$ in the above formula to obtain  
\begin{equation*}
\begin{split}
-\int \Phi_3\cdot\Delta_{\V_{\Phi_2}\chi}\Phi_1dx=&\int \Gamma_{\chi}(\Gamma_{\chi'}(\Phi_1, \Phi_3), \Phi_2) dx.  
\end{split}
\end{equation*}
Lastly, we have 
\begin{equation*}
-\int \Phi_3\cdot\Delta_{\chi}\Gamma_{\chi'}(\Phi_1,\Phi_2)dx=\int \Gamma_{\chi}(\Gamma_{\chi'}(\Phi_1, \Phi_2), \Phi_3) dx. 
\end{equation*}
Combining the above derivations, we finish the proof. 
\end{proof}

We next compute the parallel transport in density manifolds. Let $ \gamma\colon [0,T]\rightarrow \mathcal{P}_+$ be a smooth curve, with a parameter $T>0$. 
Denote $\V_\Phi$ as the tangent direction of the curve $\gamma(t)$ at time $t$. I.e., $\frac{d \gamma(t)}{dt}=-\Delta_{\chi(\gamma(t))}\Phi(t)=\V_{\Phi(t)}$. 
Consider a vector field $\V_{\eta}$ given by $\eta(t)\in C^{\infty}(\Omega)$. Then the equation for $\V_{\eta}$ to be parallel along $\gamma(t)$ satisfies   
\begin{equation*}
\V_{\partial_t\eta}+ \bar\nabla_{\V_{\Phi}}\V_\eta=0.
\end{equation*}

\begin{theorem}[Parallel transport equations]
For $\V_\eta$ to be parallel along the curve $\gamma$, then the following system of parallel transport equations holds: 
\begin{equation}\label{parallel}
\partial_t\gamma+\Delta_{\chi}\Phi=0,\quad \Delta_{\chi}\partial_t\eta+\frac{1}{2}\Big\{\Delta_{{{\V_{\Phi}}\chi}}\eta-\Delta_{{{\V_{\eta}}\chi}}\Phi+\Delta_{\chi}\Gamma_{\chi'}(\Phi,\eta)\Big\}=0.
\end{equation}
Explicitly, equation \eqref{parallel} satisfies
\begin{equation}\label{parallel1}
\left\{\begin{split}
&\partial_t\gamma+\nabla\cdot(\chi(\rho)\nabla\Phi)=0,\\
&\nabla\cdot\Big(\chi(\rho)\nabla \big[\partial_t\eta+\frac{1}{2}(\nabla\Phi, \chi'(\rho)\nabla\eta)\big]-\frac{1}{2}\nabla\cdot(\chi\nabla\Phi)\chi'(\rho)\nabla\eta+\frac{1}{2}\nabla\cdot(\chi\nabla\eta)\chi'(\rho)\nabla\Phi\Big)=0.
\end{split}\right.
\end{equation}
In addition, the following statements hold:
\begin{itemize}
\item[(i)] If $\eta_1(t)$, $\eta_2(t)$ is parallel along the curve $\gamma(t)$, then  
\begin{equation*}
\frac{d}{dt}\langle \V_{\eta_1}, \V_{\eta_2}\rangle=0.
\end{equation*}
\item[(ii)] The geodesic equation satisfies  
\begin{equation*}
\partial_t\gamma+ \Delta_{\chi}\Phi=0,\quad \Delta_{\chi}\big(\partial_t\Phi+ \frac{1}{2}\Gamma_{\chi'}(\Phi, \Phi)\big)=0.
\end{equation*}
Explicitly, a geodesics equation forms a {coupled system of continuity equation and Hamilton-Jacobi equation with nonlinear mobilities}  
\begin{equation}\label{geo}
\left\{\begin{split}
& \partial_t\gamma(t)+\nabla\cdot(\chi(\rho(t))\nabla \Phi(t))=0,\\
&\partial_t \Phi(t)+\frac{1}{2}(\nabla \Phi(t), \chi'(\rho(t))\nabla\Phi(t))=0.
\end{split}\right.
\end{equation}
\end{itemize}
\end{theorem}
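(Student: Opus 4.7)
The parallel transport equations follow by direct substitution. By definition, $\V_\eta$ is parallel along $\gamma(t)$ precisely when $\V_{\partial_t \eta} + \bar\nabla_{\V_\Phi}\V_\eta = 0$. Using $\V_{\partial_t\eta} = -\Delta_\chi \partial_t\eta$ together with the Levi--Civita formula \eqref{V12} applied at $\Phi_1=\Phi$, $\Phi_2 = \eta$ yields
\[
-\Delta_\chi \partial_t \eta - \tfrac{1}{2}\bigl\{\Delta_{\V_\Phi\chi}\eta - \Delta_{\V_\eta\chi}\Phi + \Delta_\chi \Gamma_{\chi'}(\Phi,\eta)\bigr\} = 0,
\]
which is the second line of \eqref{parallel} after multiplying by $-1$, while the first line is the continuity equation encoding $\dot\gamma = \V_\Phi$. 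To pass to the explicit form \eqref{parallel1}, I would unpack $\Delta_{\V_\Phi\chi}\eta = -\nabla\cdot(\nabla\cdot(\chi\nabla\Phi)\,\chi'(\rho)\nabla\eta)$ and its symmetric counterpart, use $\Gamma_{\chi'}(\Phi,\eta) = (\nabla\Phi, \chi'(\rho)\nabla\eta)$, and then collect the four divergences under a common outer $\nabla\cdot$.

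For part (i), the strategy is metric compatibility of the Levi--Civita connection. Setting $F(t) := \langle \V_{\eta_1(t)}, \V_{\eta_2(t)}\rangle(\gamma(t))$ and differentiating along $\gamma$, the time derivative splits into a contribution from $\V_\Phi$ acting on the metric through its density dependence (at fixed tangent vectors) plus the two $\partial_t \eta_i$ pieces. Regrouping via the Koszul identity that produced \eqref{V12} gives
\[
\tfrac{d}{dt} F(t) = \langle \V_{\partial_t \eta_1} + \bar\nabla_{\V_\Phi}\V_{\eta_1}, \V_{\eta_2}\rangle + \langle \V_{\eta_1}, \V_{\partial_t \eta_2} + \bar\nabla_{\V_\Phi}\V_{\eta_2}\rangle,
\]
which vanishes on parallel fields. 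Alternatively, one can substitute \eqref{parallel1} directly into $\int_\Omega (\nabla\eta_1,\chi(\rho)\nabla\eta_2)\,dx$ and verify cancellation by repeated integration by parts; both routes must agree.

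Part (ii) comes from specializing the parallel transport equation to $\eta = \Phi$, i.e., requiring the tangent field $\V_\Phi$ to be self-parallel. The antisymmetric pair $\Delta_{\V_\Phi\chi}\Phi - \Delta_{\V_\Phi\chi}\Phi$ cancels identically, which is equivalent to the identity $\bar\nabla_{\V_\Phi}\V_\Phi = \tfrac{1}{2}\V_{\Gamma_{\chi'}(\Phi,\Phi)} = -\tfrac{1}{2}\Delta_\chi \Gamma_{\chi'}(\Phi,\Phi)$ already supplied by Lemma \ref{lemma5}. Parallel transport therefore collapses to $\Delta_\chi(\partial_t \Phi + \tfrac{1}{2}\Gamma_{\chi'}(\Phi,\Phi)) = 0$. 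Under the postulated isomorphism $C^\infty(\Omega)/\mathbb{R} \to T_\rho \mathcal{P}_+$, the kernel of $\Delta_\chi$ consists exactly of spatially constant functions, so $\partial_t \Phi + \tfrac{1}{2}\Gamma_{\chi'}(\Phi,\Phi)$ equals some $x$-independent scalar $c(t)$, yielding \eqref{geo}.

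The main obstacle I expect is bookkeeping rather than conceptual. Unpacking $\Delta_{\V_\Phi\chi}$ and combining it with $\Delta_\chi \Gamma_{\chi'}(\Phi,\eta)$ inside a single outer divergence requires careful tracking of signs and of whether each $\chi$ or $\chi'$ sits outside or inside a gradient. The constant-mode ambiguity in $\Phi$, $\eta$ is harmless because every relevant vector field lies in the mean-zero tangent space $T_\rho\mathcal{P}_+$, which the divergence structure preserves term by term.
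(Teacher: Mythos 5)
Your proposal is correct and follows essentially the same route as the paper: substitute the Levi--Civita formula \eqref{V12} into $\V_{\partial_t\eta}+\bar\nabla_{\V_{\Phi}}\V_{\eta}=0$ and unpack the operators for \eqref{parallel1}, specialize to $\eta=\Phi$ (with the antisymmetric pair cancelling) for the geodesic equation, and use that the kernel of $\Delta_{\chi}$ consists of constants to produce $c(t)$. For part (i) the paper performs the direct substitution-and-cancellation computation rather than invoking metric compatibility abstractly, but since you list that explicit verification as your alternative route, the two arguments coincide.
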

\begin{proof}
By the Levi--Civita connection \eqref{V12}, we derive \eqref{parallel}. (i). Since $\langle \V_{\eta_1}, \V_{\eta_2}\rangle=\int \eta_1 (-\Delta_{\chi}\eta_2)dx$, and $\eta_1$, $\eta_2$ satisfy \eqref{parallel}, then 
\begin{equation*}
\begin{split}
\frac{d}{dt}\langle \V_{\eta_1}, \V_{\eta_2}\rangle
=&\int (-\Delta_{\chi}\partial_t\eta_1) \eta_2 dx-\int \eta_1 \Delta_{\V_{\Phi}\chi}\eta_2 dx+\int \eta_1 (-\Delta_{\chi} \partial_t\eta_2) dx\\
=&-\int\Big\{-\frac{1}{2}\eta_2\Delta_{\V_{\Phi}\chi}\eta_1+\frac{1}{2}\eta_2 \Delta_{\V_{\eta_1}\chi}\Phi-\frac{1}{2}\Phi \Delta_{\V_{\eta_2}\chi}\eta_1\\
&\hspace{1.2cm}+\eta_1 \Delta_{\V_{\Phi}\chi}\eta_2-\frac{1}{2}\eta_1\Delta_{\V_{\Phi}\chi}\eta_2+\frac{1}{2}\eta_1 \Delta_{\V_{\eta_2}\chi}\Phi-\frac{1}{2}\Phi\Delta_{\V_{\eta_1}\chi}\eta_2\Big\}dx\\
=&0.
\end{split}
\end{equation*}
(ii). If $\eta=\Phi$, then $(\gamma, \eta)$ satisfies the geodesic equation:  
$$ \V_{\partial_t\Phi}+\bar\nabla_{\V_{\Phi}}\V_{\Phi}=0.$$
Since $[\V_{\Phi}, \V_{\Phi}]=0$, then the geodesic equation satisfies
\begin{equation*}
(-\Delta_{\chi})\Big(\partial_t\Phi+\frac{1}{2}\Gamma_{\chi'}(\Phi, \Phi)\Big)=0.
\end{equation*}
Thus, there exists a scalar function $c(t)$, such that 
\begin{equation*}
\partial_t\Phi+\frac{1}{2}\Gamma_{\chi'}(\Phi, \Phi)=c(t).
\end{equation*}
{Clearly, if one defines $\tilde \Phi=\Phi+\int_0^tc(s)ds$, then $(\rho, \tilde\Phi)$ also satisfies the geodesic equation, with $\partial_t\tilde\Phi+\frac{1}{2}\Gamma_{\chi'}(\tilde\Phi,\tilde\Phi)=0$.}
This finishes the proof. 
\end{proof}
\begin{remark}
{We present an example of equation \eqref{geo}, which agrees with the one derived in \cite[equation (2.3)]{C1} and \cite{O1}.  
If $\chi(\rho)=m(\rho)\mathbb{I}$, $m(\rho)>0$ is a smooth function, then the geodesic in \eqref{geo} satisfies 
\begin{equation*}
\left\{\begin{split}
& \partial_t\gamma(t)+\nabla\cdot\big(m(\rho(t))\nabla \Phi(t)\big)=0,\\
&\partial_t \Phi(t)+\frac{1}{2}\big(\nabla \Phi(t), \nabla\Phi(t)\big)m'(\rho(t))=0.
\end{split}\right.
\end{equation*}
}
\end{remark}

We last present the Hessian operator of {an energy functional} $F$ in $(\mathcal{P}_+, \g)$. 
\begin{lemma}\label{diff}
Given a functional $F\in C^{\infty}(\sP_+; \mathbb{R})$, denote the Hessian operator of $F$ in $(\mathcal{P}_+, \g)$ as $\bar\Hess F:=\Hess_{\g}F\colon \sP_+\times C^{\infty}(\Omega)\times C^{\infty}(\Omega)\rightarrow\mathbb{R}$. 
Then the Hessian operator of $F$ at directions $\V_{\Phi_1}$, $\V_{\Phi_2}$ satisfies  
\begin{equation}\label{Hess}
\begin{split}
& \bar{\Hess}F(\rho)\langle \V_{\Phi_1}, \V_{\Phi_2}\rangle\\=&
\quad\int \int \Big(  \nabla_x\nabla_y \frac{\delta^2}{\delta\rho^2}F(\rho)(x,y)\chi(\rho(x))\nabla_x\Phi_1(x), \chi(\rho(y))\nabla_y\Phi_2(y)\Big) dxdy
\\
&+\frac{1}{2}\int\Big\{
\Gamma_{\chi}(\Gamma_{\chi'}(\Phi_2, \frac{\delta}{\delta\rho}F(\rho)), \Phi_1)+\Gamma_{\chi}(\Gamma_{\chi'}(\Phi_1, \frac{\delta}{\delta\rho}F(\rho)), \Phi_2)
-\Gamma_{\chi}(\Gamma_{\chi'}(\Phi_1, \Phi_2), \frac{\delta}{\delta\rho}F(\rho))\Big\}dx.
 \end{split}
 \end{equation}
 Here $\frac{\delta^2}{\delta\rho^2}$ is the $L^2$ second order variation formula. 
 \end{lemma}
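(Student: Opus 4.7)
The plan is to apply the standard Riemannian identity for the Hessian as a bilinear form,
$$\bar{\Hess} F(\rho)(\V_{\Phi_1}, \V_{\Phi_2}) = \V_{\Phi_1}(\V_{\Phi_2} F)(\rho) - (\bar\nabla_{\V_{\Phi_1}}\V_{\Phi_2}) F(\rho),$$
and evaluate each of the two pieces using material already established in the excerpt. Each piece is a scalar, so after expanding them I will just need to collect terms.

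First I would compute $\V_{\Phi_1}(\V_{\Phi_2} F)(\rho)$. From Definition \ref{DR} one has $(\V_{\Phi_2} F)(\rho) = \int_\Omega \Gamma_{\chi}(\frac{\delta}{\delta\rho}F(\rho), \Phi_2)\,dx$, and a second application of the directional derivative requires the first variation of this scalar functional in $\rho$. This computation is already carried out inside the proof of the commutator lemma \eqref{comm}, which yields
$$\frac{\delta}{\delta \rho}(\V_{\Phi_2}F)(\rho)(x) = -\int_\Omega \frac{\delta^2}{\delta\rho^2}F(\rho)(x,y)\,\Delta_{\chi}\Phi_2(y)\,dy + \Gamma_{\chi'}(\frac{\delta}{\delta\rho}F(\rho),\Phi_2)(x).$$
Plugging this into $\V_{\Phi_1}(\V_{\Phi_2}F)(\rho) = \int_\Omega \Gamma_{\chi}(\frac{\delta}{\delta\rho}(\V_{\Phi_2}F), \Phi_1)\,dx$ and integrating by parts in $x$ and $y$ converts the second-variation piece into the desired double-integral term $\int\!\!\int \nabla_x\nabla_y \frac{\delta^2 F}{\delta\rho^2}(\chi(\rho(x))\nabla_x\Phi_1, \chi(\rho(y))\nabla_y\Phi_2)\,dxdy$, while the remaining piece becomes $\int_\Omega \Gamma_{\chi}(\Gamma_{\chi'}(\frac{\delta}{\delta\rho}F,\Phi_2),\Phi_1)\,dx$.

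Second, I would evaluate $(\bar\nabla_{\V_{\Phi_1}}\V_{\Phi_2})F$ by viewing it as a directional derivative of the scalar $F$ along a tangent vector: $(\bar\nabla_{\V_{\Phi_1}}\V_{\Phi_2})F = \langle \bar\nabla_{\V_{\Phi_1}}\V_{\Phi_2}, \bar\grad F\rangle$, and recalling from the gradient proposition that $\bar\grad F = \V_{\frac{\delta}{\delta\rho}F}$. Substituting $\Phi_3 = \frac{\delta}{\delta\rho}F(\rho)$ into the Koszul-style formula \eqref{LC} of Lemma \ref{lemma4} then expresses this quantity as
$$\tfrac{1}{2}\int_\Omega\Big\{\Gamma_{\chi}(\Gamma_{\chi'}(\Phi_2, \tfrac{\delta F}{\delta\rho}),\Phi_1) - \Gamma_{\chi}(\Gamma_{\chi'}(\Phi_1, \tfrac{\delta F}{\delta\rho}),\Phi_2) + \Gamma_{\chi}(\Gamma_{\chi'}(\Phi_1,\Phi_2),\tfrac{\delta F}{\delta\rho})\Big\}dx.$$

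Finally I would subtract the two expressions and simplify. The symmetry $\Gamma_{\chi'}(\Phi,\Psi) = \Gamma_{\chi'}(\Psi,\Phi)$ shows that the non-Koszul Gamma term $\int \Gamma_{\chi}(\Gamma_{\chi'}(\tfrac{\delta F}{\delta\rho},\Phi_2),\Phi_1)\,dx$ equals $\int \Gamma_{\chi}(\Gamma_{\chi'}(\Phi_2,\tfrac{\delta F}{\delta\rho}),\Phi_1)\,dx$; subtracting the $\tfrac{1}{2}$-weighted Koszul copy leaves $\tfrac{1}{2}$ of the same term, while the remaining two Koszul pieces contribute $+\tfrac{1}{2}\Gamma_{\chi}(\Gamma_{\chi'}(\Phi_1,\tfrac{\delta F}{\delta\rho}),\Phi_2)$ and $-\tfrac{1}{2}\Gamma_{\chi}(\Gamma_{\chi'}(\Phi_1,\Phi_2),\tfrac{\delta F}{\delta\rho})$. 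Collecting these with the double integral gives exactly \eqref{Hess}. The main obstacle is really careful bookkeeping: tracking signs through integration by parts and identifying the one spot where symmetry of $\Gamma_{\chi'}$ must be invoked to merge the excess Gamma term from $\V_{\Phi_1}(\V_{\Phi_2}F)$ with the Koszul contributions.
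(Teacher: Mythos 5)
Your proposal is correct and follows essentially the same route as the paper: the same decomposition $\bar{\Hess}F(\V_{\Phi_1},\V_{\Phi_2})=\V_{\Phi_1}(\V_{\Phi_2}F)-(\bar\nabla_{\V_{\Phi_1}}\V_{\Phi_2})F$, the same first-variation computation of $\V_{\Phi_2}F$ borrowed from the commutator lemma, and the same substitution of $\frac{\delta}{\delta\rho}F$ into the Levi-Civita connection formula, with the final bookkeeping of the $\Gamma_\chi(\Gamma_{\chi'}(\cdot,\cdot),\cdot)$ terms matching \eqref{Hess}. The only cosmetic difference is that the paper carries the intermediate terms in the $\Delta_{\V_{\Phi}\chi}$ form and integrates by parts at the end, while you work directly with the Gamma-operator expressions.
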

\begin{proof}
From the definition of the Hessian operator, we have 
\begin{equation*}
\begin{split}
&\bar{\Hess}F(\rho)\langle \V_{\Phi_1}, \V_{\Phi_2}\rangle\\
=&\V_{\Phi_1}\langle \V_{\mathrm{grad}F}, \V_{\Phi_2}\rangle- \langle \V_{\mathrm{grad}F}, \bar\nabla_{\V_{\Phi_1}}\V_{\Phi_2}\rangle\\
=&\int \int \frac{\delta^2}{\delta\rho^2}F(\rho)(x,y) (-\Delta_{\chi}\Phi_1)(x)(-\Delta_{\chi}\Phi_2)(y)dxdy+\int \frac{\delta}{\delta \rho}F(\rho) (-\Delta_{\V_{\Phi_1}\chi}\Phi_2-\bar\nabla_{\V_{\Phi_1}}\V_{\Phi_2}) dx. \end{split}
\end{equation*}
Applying the explicit formula of the Levi--Civita connection in Lemma \ref{lemma4}, we have 
\begin{equation*}
\begin{split}
&\int \frac{\delta}{\delta \rho}F(\rho)\cdot \left(-\Delta_{\V_{\Phi_1}\chi}\Phi_2-\bar\nabla_{\V_{\Phi_1}}\V_{\Phi_2}\right) dx\\
=&\int \frac{\delta}{\delta \rho}F(\rho)\cdot \left(-\Delta_{\V_{\Phi_1}\chi}\Phi_2+\frac{1}{2}\Big\{\Delta_{{{\V_{\Phi_1}}\chi}}\Phi_2-\Delta_{{{\V_{\Phi_2}}\chi}}\Phi_1+\Delta_{\chi}\Gamma_{\chi'}(\Phi_1,\Phi_2)\Big\}\right) dx\\
=&\frac{1}{2}\int \frac{\delta}{\delta \rho}F(\rho) \cdot\Big\{-\Delta_{{{\V_{\Phi_1}}\chi}}\Phi_2-\Delta_{{{\V_{\Phi_2}}\chi}}\Phi_1+\Delta_{\chi}\Gamma_{\chi'}(\Phi_1,\Phi_2)\Big\} dx.
\end{split}
\end{equation*}
From the integration by parts in above formulas, we finish the proof.
\end{proof}
\begin{remark}
There are several examples of Hessian operators for different energy functionals $F$ in $(\mathcal{P}_+, \g)$. Typical choices of energy functionals $F$ include linear, interaction potential energies, and entropies, see details in \cite{C1,O1, LiG2}. {We use an example to demonstrate that the proposed Hessian operator agrees with the ones in \cite{C1}. Consider a linear energy $\mathcal{F}(\rho)=\int V(x)\rho dx$, where $V\in C^{\infty}(\Omega; \mathbb{R})$ is a smooth potential function. By letting $\Phi_1=\Phi_2$, and conducting some computations, we have 
\begin{equation*}
\begin{split}
 \bar{\Hess}F(\rho)\langle \V_{\Phi}, \V_{\Phi}\rangle=&\int\Big\{
\Gamma_{\chi}(\Gamma_{\chi'}(\Phi, V), \Phi)
-\frac{1}{2}\Gamma_{\chi}(\Gamma_{\chi'}(\Phi, \Phi), V)\Big\}dx\\
=&\int \Big\{(\nabla^2V\nabla\Phi, \nabla\Phi)m(\rho)m'(\rho)\\
&\qquad+ \Big[(\nabla \Phi, \nabla\rho)(\nabla\Phi, \nabla V)-\frac{1}{2}(\nabla \rho, \nabla V)(\nabla\Phi, \nabla\Phi)\Big]m(\rho)m''(\rho)\Big\}dx. 
 \end{split}
 \end{equation*}
The above computation agrees with the one in \cite[section2.3]{C1}.
 }
\end{remark}

\subsection{Riemannian curvature tensor}
In this section, we present the main result of this paper. We derive the Rimennain curvature tensor in $(\mathcal{P}_+, \g)$. Denote $\bar{\R}=\R_\g\colon \sP_+\times C^{\infty}(\Omega)\times C^{\infty}(\Omega)\times C^{\infty}(\Omega)\rightarrow C^{\infty}(\Omega)$.
\begin{theorem}[Riemannian curvature in hydrodynamical density manifold]\label{theorem2}
Given functions $\Phi_1$, $\Phi_2$, $\Phi_3$, $\Phi_4\in C^{\infty}(\Omega)$, the Riemannian curvature in $(\sP_+, \g)$ at directions $\V_{\Phi_1}$, $\V_{\Phi_2}$, $\V_{\Phi_3}$, $\V_{\Phi_4}$ satisfies 
\begin{equation}\label{tensor}
\begin{split}
&\langle \bar \R(\V_{\Phi_1}, \V_{\Phi_2})\V_{\Phi_3}, \V_{\Phi_4}\rangle\\
=&\quad\frac{1}{2}\int \Big\{-\Gamma_{\chi''}(\Phi_2, \Phi_4)\cdot\Delta_{\chi}\Phi_1\cdot\Delta_{\chi}\Phi_3-\Gamma_{\chi''}(\Phi_1, \Phi_3)\cdot\Delta_{\chi}\Phi_2\cdot\Delta_{\chi}\Phi_4\\
&\hspace{1.5cm}+\Gamma_{\chi''}(\Phi_2, \Phi_3)\cdot\Delta_{\chi}\Phi_1\cdot\Delta_{\chi}\Phi_4+\Gamma_{\chi''}(\Phi_1, \Phi_4)\cdot\Delta_{\chi}\Phi_2\cdot\Delta_{\chi}\Phi_3\Big\}dx\\
&+\frac{1}{4}\int\Big\{-\Gamma_{\chi}(\Gamma_{\chi'}(\Gamma_{\chi'}(\Phi_2,\Phi_4), \Phi_1), \Phi_3)-\Gamma_{\chi}(\Gamma_{\chi'}(\Gamma_{\chi'}(\Phi_2,\Phi_4), \Phi_3), \Phi_1)\\
&\hspace{1.5cm}-\Gamma_{\chi}(\Gamma_{\chi'}(\Gamma_{\chi'}(\Phi_1,\Phi_3), \Phi_2), \Phi_4)-\Gamma_{\chi}(\Gamma_{\chi'}(\Gamma_{\chi'}(\Phi_1,\Phi_3), \Phi_4), \Phi_2)\\
&\hspace{1.5cm}+\Gamma_{\chi}(\Gamma_{\chi'}(\Gamma_{\chi'}(\Phi_2,\Phi_3), \Phi_1), \Phi_4)+\Gamma_{\chi}(\Gamma_{\chi'}(\Gamma_{\chi'}(\Phi_2,\Phi_3), \Phi_4), \Phi_1)\\
&\hspace{1.5cm}+\Gamma_{\chi}(\Gamma_{\chi'}(\Gamma_{\chi'}(\Phi_1,\Phi_4), \Phi_2), \Phi_3)+\Gamma_{\chi}(\Gamma_{\chi'}(\Gamma_{\chi'}(\Phi_1,\Phi_4), \Phi_3), \Phi_2)\\
&\hspace{1.5cm}+ \Gamma_{\chi}(\Gamma_{\chi'}(\Phi_1, \Phi_3), \Gamma_{\chi'}(\Phi_2, \Phi_4))- \Gamma_{\chi}(\Gamma_{\chi'}(\Phi_2, \Phi_3), \Gamma_{\chi'}(\Phi_1, \Phi_4))\Big\}dx\\
&-\frac{1}{4}\int\Big\{[\V_{\Phi_1}, \V_{\Phi_3}]\cdot\Delta_{\chi}^{\dd}[\V_{\Phi_2}, \V_{\Phi_4}]-[\V_{\Phi_2}, \V_{\Phi_3}]\cdot\Delta_{\chi}^{\dd}[\V_{\Phi_1}, \V_{\Phi_4}]+2[\V_{\Phi_3}, \V_{\Phi_4}]\cdot\Delta_{\chi}^{\dd}[\V_{\Phi_1}, \V_{\Phi_2}]\Big\} dx.
\end{split}
\end{equation}
\end{theorem}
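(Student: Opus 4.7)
The plan is to compute $\langle \bar R(\V_{\Phi_1}, \V_{\Phi_2}) \V_{\Phi_3}, \V_{\Phi_4}\rangle$ starting from the standard identity
$$\bar R(X, Y) Z = \bar\nabla_X \bar\nabla_Y Z - \bar\nabla_Y \bar\nabla_X Z - \bar\nabla_{[X,Y]} Z,$$
with the abbreviations $X=\V_{\Phi_1}$, $Y=\V_{\Phi_2}$, $Z=\V_{\Phi_3}$, $W=\V_{\Phi_4}$, and to pair it against $W$ in $\g$. Rather than iterate the Levi-Civita connection symbolically (which would require inverting $-\Delta_\chi$ to express $\bar\nabla_X Y$ as a single $\V_{\Psi}$ and then re-differentiating through the implicit $\rho$-dependence of $\Psi$), I would use the scalar product rule $\langle \bar\nabla_X \bar\nabla_Y Z, W\rangle = X\langle \bar\nabla_Y Z, W\rangle - \langle \bar\nabla_Y Z, \bar\nabla_X W\rangle$ to split the curvature into three pieces:
\begin{equation*}
\begin{split}
\text{(I)}\quad & X\langle \bar\nabla_Y Z, W\rangle - Y\langle \bar\nabla_X Z, W\rangle, \\
\text{(II)}\quad & \langle \bar\nabla_X Z, \bar\nabla_Y W\rangle - \langle \bar\nabla_Y Z, \bar\nabla_X W\rangle, \\
\text{(III)}\quad & -\langle \bar\nabla_{[X,Y]} Z, W\rangle.
\end{split}
\end{equation*}

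For piece (I), I would substitute the Koszul expression from Lemma \ref{lemma4} for $\langle \bar\nabla_Y Z, W\rangle$ and then differentiate the resulting scalar functional of $\rho$ along $X$ using Definition \ref{DR}. Derivatives landing on the inner $\chi$ in $\Gamma_\chi$ produce $\Gamma_{\chi'}$ factors that combine with the existing ones into the iterated $\Gamma_\chi(\Gamma_{\chi'}(\Gamma_{\chi'}(\cdot,\cdot),\cdot),\cdot)$ terms in the second block of \eqref{tensor}; derivatives landing on $\chi'$ produce the $\chi''$-kernel, giving the first block $\frac{1}{2}\int \Gamma_{\chi''}(\cdots)\Delta_\chi\Phi_i\Delta_\chi\Phi_j\,dx$. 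Antisymmetrization in $(X,Y)$ kills the $\frac{\delta^2}{\delta\rho^2}$ cross-terms, as in the proof of \eqref{comm}.

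Piece (II) is the key structural step. Using Lemma \ref{lemma5} together with $[X,W] = \bar\nabla_X W - \bar\nabla_W X$, I would write
$$\bar\nabla_X W = \tfrac{1}{2}\V_{\Gamma_{\chi'}(\Phi_1,\Phi_4)} + \tfrac{1}{2}[\V_{\Phi_1},\V_{\Phi_4}],$$
and similarly for $\bar\nabla_Y W$. The pairing $\langle \bar\nabla_Y Z, \bar\nabla_X W\rangle$ then splits into a $\V_{\Gamma_{\chi'}(\cdots)}$-part, handled directly by Lemma \ref{lemma4} and contributing further doubly-nested $\Gamma_{\chi'}\Gamma_{\chi'}$ integrals to the second block, and a commutator-part $\langle \bar\nabla_Y Z, [\V_{\Phi_1},\V_{\Phi_4}]\rangle$. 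Since $[\V_{\Phi_1},\V_{\Phi_4}]\in T_\rho\mathcal{P}_+$, the latter inner product is evaluated via $\langle \V_a,\V_b\rangle = -\int \V_a \Delta_\chi^\dagger \V_b\,dx$, producing exactly the $\int [\V_{\Phi_i},\V_{\Phi_j}]\Delta_\chi^{\dd}[\V_{\Phi_k},\V_{\Phi_l}]\,dx$ integrals of the third block. For piece (III), the commutator formula \eqref{comm_ex} expresses $[X,Y]$ as a combination of $\Delta_{\V_{\Phi_i}\chi}\Phi_j$ terms, and a further application of Lemma \ref{lemma4} produces the $\langle [X,Y],[Z,W]\rangle$-type contribution with the coefficient $2$ observed in \eqref{tensor}.

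The main obstacle will be the combinatorial bookkeeping: (I) and (II) each generate roughly a dozen $\Gamma_{\chi'}\Gamma_{\chi'}$ integrals, together with several commutator-valued inner products, and the $\chi''$-terms entangle with the Christoffel-squared terms before integration by parts disentangles them. I would organize the calculation by collecting terms according to their type ($\Gamma_{\chi''}$, iterated $\Gamma_{\chi'}$, and $[\cdot,\cdot]\Delta_\chi^{\dd}[\cdot,\cdot]$), exploit the pairwise symmetries $\Phi_1\leftrightarrow \Phi_2$, $\Phi_3\leftrightarrow\Phi_4$, $(\Phi_1,\Phi_2)\leftrightarrow(\Phi_3,\Phi_4)$ of the curvature tensor as running consistency checks, and use the symmetry $\frac{\delta^2}{\delta\rho^2}F(\rho)(x,y)=\frac{\delta^2}{\delta\rho^2}F(\rho)(y,x)$ exactly as in the commutator lemma to eliminate any residual second-variation kernels. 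The final expression should then match \eqref{tensor} after at most one round of integration by parts per term.
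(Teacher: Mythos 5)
Your proposal follows essentially the same route as the paper's proof: the identical splitting of $\langle \bar\R(\V_{\Phi_1},\V_{\Phi_2})\V_{\Phi_3},\V_{\Phi_4}\rangle$ into the three pieces of equation \eqref{main}, the same differentiation of the connection coefficient along $\V_{\Phi_1}$ for piece (I) (the paper differentiates the $\Delta_{\V\chi}$ form \eqref{a} rather than the $\Gamma_\chi(\Gamma_{\chi'})$ form of Lemma \ref{lemma4}, but these agree after integration by parts), the same decomposition $\bar\nabla_{\V_{\Phi_1}}\V_{\Phi_4}=\tfrac12\V_{\Gamma_{\chi'}(\Phi_1,\Phi_4)}+\tfrac12[\V_{\Phi_1},\V_{\Phi_4}]$ for piece (II), and the same use of \eqref{comm_ex} for piece (III). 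The plan is sound and matches the paper's argument in all structural respects.
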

\begin{proof}
To derive the curvature tensor of vector fields $\V_{\Phi_a}$, $a=1, 2,3,4$,  we apply the following formula:
\begin{equation}\label{main}
\begin{split}
\langle \bar\R(\V_{\Phi_1}, \V_{\Phi_2}) \V_{\Phi_3}, \V_{\Phi_4}\rangle=& \langle \bar\nabla_{\V_{\Phi_1}}\bar\nabla_{\V_{\Phi_2}}\V_{\Phi_3}-\bar\nabla_{\V_{\Phi_2}}\bar\nabla_{\V_{\Phi_1}}\V_{\Phi_3}-\bar\nabla_{[\V_{\Phi_1}, \V_{\Phi_2}]}\V_{\Phi_3} , \V_{\Phi_4} \rangle\\
=&\quad \V_{\Phi_1}\langle \bar\nabla_{\V_{\Phi_2}}\V_{\Phi_3}, \V_{\Phi_4}\rangle - \langle \bar\nabla_{\V_{\Phi_2}}\V_{\Phi_3}, \bar\nabla_{\V_{\Phi_1}}\V_{\Phi_4}\rangle \\
&- \V_{\Phi_2}\langle\bar\nabla_{\V_{\Phi_1}}\V_{\Phi_3},\V_{\Phi_4}\rangle+\langle\bar\nabla_{\V_{\Phi_1}}\V_{\Phi_3}, \bar\nabla_{\V_{\Phi_2}}\V_{\Phi_4}\rangle\\
&- \langle\bar\nabla_{[\V_{\Phi_1}, \V_{\Phi_2}]}\V_{\Phi_3}, \V_{\Phi_4}\rangle.
\end{split}
\end{equation}
We estimate the above formula in three steps. Firstly, for indices $a,b,c\in \{1,2,3,4\}$, from \eqref{a}, we denote
\begin{equation*}
\V_{abc}(\rho):=\langle \bar\nabla_{\V_{\Phi_a}}\V_{\Phi_b}, \V_{\Phi_c}\rangle =-\frac{1}{2}\int \Big(\Phi_c \Delta_{\V_{\Phi_a}\chi}\Phi_b- \Phi_c\Delta_{\V_{\Phi_b}\chi}\Phi_a+\Phi_a \Delta_{\V_{\Phi_c}\chi}\Phi_b\Big)dx,
\end{equation*}
and write $\V_{\Phi_a}\V_{\Phi_b}\chi:=((\V_{\Phi_a}\V_{\Phi_b}\chi)_{ij})_{1\leq i,j\leq d}$, with 
\begin{equation}\label{VaVb}
(\V_{\Phi_a}\V_{\Phi_b}\chi)_{ij}=\nabla\cdot\Big(\chi'\nabla\cdot(\chi\nabla\Phi_a)\nabla\Phi_b\Big)\frac{\partial}{\partial \rho}\chi_{ij}(\rho)+\frac{\partial^2}{\partial\rho^2}\chi_{ij}(\rho)\cdot\Delta_{\chi}\Phi_a\Delta_{\chi}\Phi_b. 
\end{equation}
Thus
\begin{equation}\label{ref1}
\begin{split}
\V_{\Phi_1}\langle \bar\nabla_{\V_{\Phi_2}}\V_{\Phi_3}, \V_{\Phi_4}\rangle
=&\frac{d}{d\epsilon}|_{\epsilon=0} \V_{234}(\rho-\epsilon\Delta_{\chi}\Phi_1) \\
=&-\frac{1}{2}\int \Big(\Phi_4 \cdot\Delta_{\V_{\Phi_1}\V_{\Phi_2}\chi}\Phi_3-\Phi_4 \cdot\Delta_{\V_{\Phi_1}\V_{\Phi_3}\chi}\Phi_2+\Phi_2\cdot\Delta_{\V_{\Phi_1}\V_{\Phi_4}\chi}\Phi_3\Big)dx. 
\end{split}
\end{equation}
Similarly, by exchanging the index $1$, $2$, we have 
\begin{equation}\label{ref2}
\begin{split}
\V_{\Phi_2}\langle \bar\nabla_{\V_{\Phi_1}}\V_{\Phi_3}, \V_{\Phi_4}\rangle 
=&-\frac{1}{2}\int \Big(\Phi_4 \cdot\Delta_{\V_{\Phi_2}\V_{\Phi_1}\chi}\Phi_3-\Phi_4 \cdot\Delta_{\V_{\Phi_2}\V_{\Phi_3}\chi}\Phi_1+\Phi_1\cdot\Delta_{\V_{\Phi_2}\V_{\Phi_4}\chi}\Phi_3\Big)dx. 
\end{split}
\end{equation}
Secondly, since 
\begin{equation*}
\bar\nabla_{\V_{\Phi_2}}\V_{\Phi_3}=\frac{1}{2}[\V_{\Phi_2},\V_{\Phi_3}]-\frac{1}{2}\Delta_{\chi}\Gamma_{\chi'}(\Phi_2, \Phi_3).
\end{equation*}
Then 
\begin{equation}\label{ref3}
\begin{split}
\langle \bar\nabla_{\V_{\Phi_2}}\V_{\Phi_3}, \bar\nabla_{\V_{\Phi_1}}\V_{\Phi_4}\rangle=&\int \bar\nabla_{\V_{\Phi_2}}\V_{\Phi_3} \cdot (-\Delta_{\chi})^{\dd} \bar\nabla_{\V_{\Phi_1}}\V_{\Phi_4}dx\\
=&\frac{1}{4}\int \Big\{-[\V_{\Phi_2}, \V_{\Phi_3}]\cdot\Delta_{\chi}^{\dd}[\V_{\Phi_1}, \V_{\Phi_4}]+[\V_{\Phi_2}, \V_{\Phi_3}]\cdot\Gamma_{\chi'}(\Phi_1,\Phi_4)\\
&\hspace{1cm}+[\V_{\Phi_1}, \V_{\Phi_4}]\cdot\Gamma_{\chi'}(\Phi_2, \Phi_3)-\Gamma_{\chi'}(\Phi_2,\Phi_3)\cdot\Delta_{\chi}\Gamma_{\chi'}(\Phi_1, \Phi_4)\Big\}dx.
\end{split}
\end{equation}
Similarly, by exchanging the index $1$, $2$, we have \begin{equation}\label{ref4}
\begin{split}
\langle \bar\nabla_{\V_{\Phi_1}}\V_{\Phi_3}, \bar\nabla_{\V_{\Phi_2}}\V_{\Phi_4}\rangle
=&\frac{1}{4}\int\Big\{-[\V_{\Phi_1}, \V_{\Phi_3}]\cdot\Delta_{\chi}^{\dd}[\V_{\Phi_2}, \V_{\Phi_4}]+[\V_{\Phi_1}, \V_{\Phi_3}]\cdot\Gamma_{\chi'}(\Phi_2,\Phi_4)\\
&\hspace{1cm}+[\V_{\Phi_2}, \V_{\Phi_4}]\cdot\Gamma_{\chi'}(\Phi_1, \Phi_3)-\Gamma_{\chi'}(\Phi_1,\Phi_3)\cdot\Delta_{\chi}\Gamma_{\chi'}(\Phi_2, \Phi_4)\Big\}dx.
\end{split}
\end{equation}

Thirdly, denote $\V_{\Phi}=[\V_{\Phi_1}, \V_{\Phi_2}]$, where
$$\Phi=-\Delta_{\chi}^{\dd}[\V_{\Phi_1}, \V_{\Phi_2}].$$
Then 
\begin{equation}\label{ref5}
\begin{split}
&\langle \bar\nabla_{[\V_{\Phi_1}, \V_{\Phi_2}]} \V_{\Phi_3}, \V_{\Phi_4}\rangle\\
=&\frac{1}{2}\int \Phi_4\cdot\Big\{-\Delta_{\V_\Phi\chi}\Phi_3+\Delta_{\V_{\Phi_3}\chi}\Phi-\Delta_{\chi}\Gamma_{\chi'}(\Phi, \Phi_3)\Big\}dx\\
=&-\int \frac{1}{2}\Phi_4 \cdot\Delta_{\V_\Phi\chi}\Phi_3-\frac{1}{2}\Phi_4 \cdot\Delta_{\V_{\Phi_3}\chi}\Phi+\frac{1}{2}\Phi_3\cdot\Delta_{\V_{\Phi_4}\chi}\Phi dx\\
=&\int -\frac{1}{2}\Phi_4 \cdot\Delta_{[\V_{\Phi_1}, \V_{\Phi_2}]\chi}\Phi_3-\frac{1}{2}[\V_{\Phi_4}, \V_{\Phi_3}]\cdot\Delta_{\chi}^{\dd}[\V_{\Phi_1}, \V_{\Phi_2}] dx\\
=&\int -\frac{1}{2}\Phi_4 \cdot\Delta_{\V_{\Phi_1}\V_{\Phi_2}\chi}\Phi_3+\frac{1}{2}\Phi_4 \cdot\Delta_{\V_{\Phi_2}\V_{\Phi_1}\chi}\Phi_3-\frac{1}{2}[\V_{\Phi_4}, \V_{\Phi_3}]\cdot\Delta_{\chi}^{\dd}[\V_{\Phi_1}, \V_{\Phi_2}]dx.
\end{split}
\end{equation}
For the simplicity of presentation, we write 
\begin{equation}\label{abcd}
\begin{split}
(abcd):=&-\int \Phi_a\cdot\Delta_{\V_{\Phi_b}\V_{\Phi_c}\chi}\Phi_d dx\\
=&\int \Gamma_{\chi}(\Gamma_{\chi'}(\Gamma_{\chi'}(\Phi_a,\Phi_d), \Phi_c), \Phi_b)+\Gamma_{\chi''}(\Phi_a,\Phi_d)\cdot\Delta_{\chi}\Phi_b\Delta_{\chi}\Phi_c dx, 
\end{split}
\end{equation}
where the second equation is derived from the definition \eqref{VaVb} and the integration by parts. 
Clearly, we have $(abcd)=(dbca)$. Substituting equations \eqref{ref1}, \eqref{ref2}, \eqref{ref3}, \eqref{ref4}, \eqref{ref5} into equation \eqref{main}, we have 
\begin{equation}\label{ref6}
\begin{split}
&\langle \bar\R(\V_{\Phi_1}, \V_{\Phi_2})\V_{\Phi_3}, \V_{\Phi_4}\rangle\\
=&\quad\frac{1}{4}\big\{(2143)+(2413)\big\}-\frac{1}{4}\big\{(2134)+(2314)\big\}+\frac{1}{4}\big\{(1234)+(1324)\big\}-\frac{1}{4}\big\{(1243)+(1423)\big\}\\ 
&-\frac{1}{4}\int\Big\{\Gamma_{\chi'}(\Phi_1, \Phi_3)\cdot\Delta_{\chi}\Gamma_{\chi'}(\Phi_2, \Phi_4)+\Gamma_{\chi'}(\Phi_2, \Phi_3)\cdot\Delta_{\chi}\Gamma_{\chi'}(\Phi_1, \Phi_4)\Big\} dx\\
&-\frac{1}{4}\int\Big\{ [\V_{\Phi_1}, \V_{\Phi_3}]\cdot\Delta_{\chi}^{\dd}[\V_{\Phi_2}, \V_{\Phi_4}]-[\V_{\Phi_2}, \V_{\Phi_3}]\cdot\Delta_{\chi}^{\dd}[\V_{\Phi_1}, \V_{\Phi_4}]+2  [\V_{\Phi_3}, \V_{\Phi_4}]\cdot\Delta_{\chi}^{\dd}[\V_{\Phi_1}, \V_{\Phi_2}]\Big\}dx.
\end{split}
\end{equation}
From equation \eqref{abcd} and the integration by parts, we derive equation \eqref{tensor}. This finishes the proof.
\end{proof}
We last compute the sectional curvature in $(\sP_+, \g)$. Denote $\bar{\K}=\K_\g\colon C^{\infty}(\Omega)\times C^{\infty}(\Omega)\rightarrow \mathbb{R}$.
\begin{corollary}[Sectional curvature in hydrodynamical density manifold]
Given functions $\Phi_1$, $\Phi_2\in C^{\infty}(\Omega)$, the sectional curvature in $(\sP_+, \g)$ at directions $\V_{\Phi_1}$, $\V_{\Phi_2}$ satisfies 
\begin{equation}\label{sec_tensor}
\begin{split}
&\bar\K(\V_{\Phi_1}, \V_{\Phi_2})\\
=&\quad\frac{1}{2Z}\int \Big\{-2\Gamma_{\chi''}(\Phi_1, \Phi_2)\cdot\Delta_{\chi}\Phi_1\cdot\Delta_{\chi}\Phi_2+\Gamma_{\chi''}(\Phi_2, \Phi_2)(\Delta_{\chi}\Phi_1)^2+\Gamma_{\chi''}(\Phi_1, \Phi_1)(\Delta_{\chi}\Phi_2)^2\Big\}dx\\
&+\frac{1}{4Z}\int\Big\{-2\Gamma_{\chi}(\Gamma_{\chi'}(\Gamma_{\chi'}(\Phi_1,\Phi_2), \Phi_1), \Phi_2)-2\Gamma_{\chi}(\Gamma_{\chi'}(\Gamma_{\chi'}(\Phi_1,\Phi_2), \Phi_2), \Phi_1)\\
&\hspace{1.9cm}+2\Gamma_{\chi}(\Gamma_{\chi'}(\Gamma_{\chi'}(\Phi_1,\Phi_1), \Phi_2), \Phi_2)+2\Gamma_{\chi}(\Gamma_{\chi'}(\Gamma_{\chi'}(\Phi_2,\Phi_2), \Phi_1), \Phi_1)\\
&\hspace{1.9cm}+ \Gamma_{\chi}(\Gamma_{\chi'}(\Phi_1, \Phi_2), \Gamma_{\chi'}(\Phi_1, \Phi_2))- \Gamma_{\chi}(\Gamma_{\chi'}(\Phi_1, \Phi_1), \Gamma_{\chi'}(\Phi_2, \Phi_2))\Big\}dx\\
&+\frac{3}{4Z}\int [\V_{\Phi_1}, \V_{\Phi_2}]\cdot\Delta_{\chi}^{\dd}[\V_{\Phi_1}, \V_{\Phi_2}]dx,
\end{split}
\end{equation}
where $Z\in \mathbb{R}_+$ is a scalar defined as
\begin{equation}\label{Z}
\begin{split}
Z:=&\int \Gamma_\chi(\Phi_1, \Phi_1)dx\cdot \int \Gamma_{\chi}(\Phi_2, \Phi_2)dx- |\int \Gamma_{\chi}(\Phi_1, \Phi_2)dx|^2. 
\end{split}
\end{equation}
\end{corollary}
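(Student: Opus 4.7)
The plan is to specialize the Riemannian curvature tensor formula \eqref{tensor} from Theorem \ref{theorem2} to the diagonal case $\Phi_3=\Phi_2$, $\Phi_4=\Phi_1$ and divide by the normalization $Z$, which is exactly the squared area of the parallelogram spanned by $\V_{\Phi_1}$ and $\V_{\Phi_2}$ in $(\sP_+,\g)$. Recall that by definition
\begin{equation*}
\bar\K(\V_{\Phi_1},\V_{\Phi_2})=\frac{\langle \bar \R(\V_{\Phi_1},\V_{\Phi_2})\V_{\Phi_2},\V_{\Phi_1}\rangle}{\langle\V_{\Phi_1},\V_{\Phi_1}\rangle\langle\V_{\Phi_2},\V_{\Phi_2}\rangle-\langle\V_{\Phi_1},\V_{\Phi_2}\rangle^2},
\end{equation*}
and the inner product formula $\langle \V_{\Phi_a},\V_{\Phi_b}\rangle=\int_\Omega \Gamma_\chi(\Phi_a,\Phi_b)\,dx$ derived earlier immediately identifies the denominator with $Z$ in \eqref{Z}. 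So the proof reduces to simplifying the numerator after substitution.

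For the first line of \eqref{tensor}, I would substitute and use the symmetry $\Gamma_{\chi''}(A,B)=\Gamma_{\chi''}(B,A)$ coming from the symmetry of the matrix $\chi''(\rho)$; the four $\Gamma_{\chi''}$-terms collapse to the three displayed in \eqref{sec_tensor}, with the cross term carrying a coefficient of $-2$ and the two pure terms $\Gamma_{\chi''}(\Phi_1,\Phi_1)(\Delta_\chi\Phi_2)^2$ and $\Gamma_{\chi''}(\Phi_2,\Phi_2)(\Delta_\chi\Phi_1)^2$ appearing with coefficient $1$. For the eight-term block involving $\Gamma_\chi(\Gamma_{\chi'}(\Gamma_{\chi'}(\cdot,\cdot),\cdot),\cdot)$, I would again use the symmetry $\Gamma_{\chi'}(A,B)=\Gamma_{\chi'}(B,A)$ and substitute $\Phi_3\mapsto \Phi_2$, $\Phi_4\mapsto \Phi_1$; the four pairs of terms either combine into $-2\Gamma_\chi(\Gamma_{\chi'}(\Gamma_{\chi'}(\Phi_1,\Phi_2),\Phi_j),\Phi_k)$ factors or assemble into the pure $+2\Gamma_\chi(\Gamma_{\chi'}(\Gamma_{\chi'}(\Phi_1,\Phi_1),\Phi_2),\Phi_2)$ and its index-swap, and the two $\Gamma_\chi(\Gamma_{\chi'},\Gamma_{\chi'})$ terms match the displayed formula verbatim.

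For the commutator block on the last line of \eqref{tensor}, the crucial identities are $[\V_{\Phi},\V_{\Phi}]=0$ and the antisymmetry $[\V_{\Phi_a},\V_{\Phi_b}]=-[\V_{\Phi_b},\V_{\Phi_a}]$. Substituting $\Phi_3=\Phi_2$, $\Phi_4=\Phi_1$ makes the middle term $[\V_{\Phi_2},\V_{\Phi_2}]\Delta_\chi^{\dd}[\V_{\Phi_1},\V_{\Phi_1}]$ vanish, while the first and third each produce a $-[\V_{\Phi_1},\V_{\Phi_2}]\Delta_\chi^{\dd}[\V_{\Phi_1},\V_{\Phi_2}]$ (with the factor of $2$ in the third). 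The overall prefactor $-\tfrac14$ then flips sign and aggregates the coefficients to $3$, yielding exactly $\frac{3}{4Z}\int_\Omega [\V_{\Phi_1},\V_{\Phi_2}]\Delta_\chi^{\dd}[\V_{\Phi_1},\V_{\Phi_2}]\,dx$.

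The main obstacle will be the bookkeeping in the eight $\Gamma_\chi\circ\Gamma_{\chi'}\circ\Gamma_{\chi'}$ terms: one must carefully track which argument each $\Phi_i$ sits in before collecting, since $\Gamma_\chi(\Gamma_{\chi'}(A,B),C)$ is not generally symmetric in $\{A,B,C\}$ even when $\Gamma_{\chi'}$ is symmetric in its two entries. After that, the proof is purely a verification that the specialization of \eqref{tensor} agrees term by term with \eqref{sec_tensor}; no further analytic input is required.
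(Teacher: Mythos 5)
Your proposal is correct and follows exactly the paper's own route: the paper likewise specializes \eqref{tensor} to $\Phi_3=\Phi_2$, $\Phi_4=\Phi_1$, identifies the denominator $\langle\V_{\Phi_1},\V_{\Phi_1}\rangle\langle\V_{\Phi_2},\V_{\Phi_2}\rangle-\langle\V_{\Phi_1},\V_{\Phi_2}\rangle^2$ with $Z$ via $\langle\V_{\Phi_a},\V_{\Phi_b}\rangle=\int_\Omega\Gamma_\chi(\Phi_a,\Phi_b)\,dx$, and uses $[\V_{\Phi_i},\V_{\Phi_i}]=0$ together with the symmetries of $\Gamma_{\chi'}$ and $\Gamma_{\chi''}$ to collapse the terms. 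Your version simply spells out the bookkeeping that the paper's one-line proof leaves implicit; no gap.
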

\begin{proof}
Based on the definition of sectional curvature, we have 
\begin{equation*}
{\bar\K}(\V_{\Phi_1},\V_{\Phi_2})=\frac{\bar{\R}((\V_{\Phi_1},\V_{\Phi_2})\V_{\Phi_2}, \V_{\Phi_1})}{\langle \V_{\Phi_1}, \V_{\Phi_1}\rangle\langle \V_{\Phi_2}, \V_{\Phi_2}\rangle- \langle \V_{\Phi_1}, \V_{\Phi_2}\rangle^2}.
\end{equation*}
From the fact that $[\V_{\Phi_i}, \V_{\Phi_i}]=0$ with $i=1,2$, we finish the proof. 
\end{proof}
\begin{remark}{
From the Cauchy-Schwarz inequality and $\chi(\rho)\geq 0$, it is clear that $Z$ defined in \eqref{Z} is non-negative, i.e., 
\begin{equation*}
(\int (\nabla\Phi_1, \chi(\rho)\nabla\Phi_2) dx)^2\leq \int (\nabla\Phi_1, \chi(\rho)\nabla\Phi_1) dx\cdot \int (\nabla\Phi_2, \chi(\rho)\nabla\Phi_2) dx. 
\end{equation*}}
\end{remark}

\section{Curvatures in one dimensional density space}\label{sec4}
{When $\Omega=\mathbb{R}^1$}, we derive some explicit formulas for Riemannian and sectional curvatures in hydrodynamical density manifolds.  
\begin{theorem}\label{thm3}
Suppose $\Omega=\mathbb{R}^1$. Given functions $\Phi_1$, $\Phi_2$, $\Phi_3$, $\Phi_4\in C^{\infty}(\Omega)$, 
the Riemannian curvature in $(\sP_+, \g)$ at directions $\V_{\Phi_1}$, $\V_{\Phi_2}$, $\V_{\Phi_3}$, $\V_{\Phi_4}$ satisfies 
\begin{equation}\label{Riem_tensor_exp}
\begin{split}
&\langle \bar \R(\V_{\Phi_1}, \V_{\Phi_2})\V_{\Phi_3}, \V_{\Phi_4}\rangle\\
=&\quad\frac{1}{2}\int \chi''(\rho(x))\chi(\rho(x))^2\Big\{-\Phi_2'(x)\Phi_4'(x)\Phi_1''(x)\Phi_3''(x)-\Phi_1'(x)\Phi_3'(x)\Phi_2''(x)\Phi_4''(x)\\
&\hspace{4.4cm}+\Phi'_2(x)\Phi'_3(x)\Phi''_1(x)\Phi''_4(x)+\Phi'_1(x)\Phi'_4(x)\Phi''_2(x)\Phi''_3(x)\Big\}dx.\end{split}
\end{equation}
Moreover, the sectional curvature at directions $\V_{\Phi_1}$, $\V_{\Phi_2}$ satisfies 
\begin{equation}\label{sec_tensor_exp}
\begin{split}
\bar\K(\V_{\Phi_1}, \V_{\Phi_2})
=&\frac{1}{2Z}\int \chi''(\rho(x))\chi(\rho(x))^2\big(\Phi''_2(x)\Phi'_1(x)-\Phi''_1(x)\Phi'_2(x)\big)^2dx,
\end{split}
\end{equation}
where $Z\in\mathbb{R}_+$ is a scalar defined in equation \eqref{Z}, such that 
\begin{equation*}
Z=\int |\Phi'_1|^2\chi(\rho)dx \cdot \int |\Phi'_2|^2\chi(\rho)dx -|\int \Phi'_1\Phi'_2\chi(\rho) dx|^2. 
\end{equation*}
If $\chi(\rho)$ is convex in $\rho$, then the sectional curvature $\bar\K$ is nonnegative. If $\chi(\rho)$ is concave in $\rho$, then the sectional curvature $\bar\K$ is nonpositive.   
\end{theorem}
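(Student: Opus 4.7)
The plan is to specialize the general curvature tensor formula \eqref{tensor} of Theorem \ref{theorem2} to $\Omega=\mathbb{T}^1$, where the bilinear operators collapse to scalar products, $\Gamma_\chi(u,v)=\chi(\rho)u'v'$, $\Gamma_{\chi'}(u,v)=\chi'(\rho)u'v'$, $\Gamma_{\chi''}(u,v)=\chi''(\rho)u'v'$, and $\Delta_\chi\Phi=(\chi(\rho)\Phi')'$. Defining the Wronskian $W_{ij}:=\Phi_j''\Phi_i'-\Phi_i''\Phi_j'$, the central identity I will use is the one-dimensional \emph{Wronskian collapse}
\begin{equation*}
\Phi_j'\,\Delta_\chi\Phi_i - \Phi_i'\,\Delta_\chi\Phi_j \;=\; -\chi(\rho)\,W_{ij},
\end{equation*}
in which the $\chi'(\rho)\rho'$ cross-terms cancel automatically.

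The first step is to reorganize the $\Gamma_{\chi''}$-block of \eqref{tensor}. By direct factorization its four summands combine into $\chi''(\rho)\bigl(\Phi_2'\Delta_\chi\Phi_1-\Phi_1'\Delta_\chi\Phi_2\bigr)\bigl(\Phi_3'\Delta_\chi\Phi_4-\Phi_4'\Delta_\chi\Phi_3\bigr)$, which by the Wronskian collapse equals $-\chi''(\rho)\chi(\rho)^2\,W_{12}W_{34}$. Expanding $-W_{12}W_{34}$ reproduces term by term the bracket on the right-hand side of \eqref{Riem_tensor_exp}, so this block alone already produces the claimed formula. The remaining task is to show that the eight cubic $\Gamma_\chi\Gamma_{\chi'}\Gamma_{\chi'}$-terms, the two $\Gamma_\chi(\Gamma_{\chi'},\Gamma_{\chi'})$-terms, and the three $\Delta_\chi^\dagger$-commutator terms in \eqref{tensor} cancel in 1D. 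For the commutator block, applying the Wronskian collapse to \eqref{comm_ex} gives $[\V_{\Phi_i},\V_{\Phi_j}]=\partial_x\bigl(\chi'(\rho)\chi(\rho)W_{ij}\bigr)$, whence
\begin{equation*}
\int[\V_{\Phi_i},\V_{\Phi_j}]\Delta_\chi^\dagger[\V_{\Phi_k},\V_{\Phi_l}]\,dx \;=\; -\int \chi(\rho)\chi'(\rho)^2\,W_{ij}W_{kl}\,dx.
\end{equation*}
For each cubic term $\Gamma_\chi(\Gamma_{\chi'}(\Gamma_{\chi'}(\Phi_a,\Phi_b),\Phi_c),\Phi_d)$, integration by parts combined with the Wronskian collapse expresses it, up to residuals that recombine into $\Gamma_\chi(\Gamma_{\chi'},\Gamma_{\chi'})$ terms, as $\int \chi\chi'^2 W_{\cdot\cdot}W_{\cdot\cdot}dx$-integrals. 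The main obstacle will be the index bookkeeping: using the antisymmetry $W_{ij}=-W_{ji}$ and the coefficient symmetry $(abcd)=(dcba)$ inherited from the proof of Theorem \ref{theorem2}, the full cubic block must be shown equal to the negative of the commutator contribution, producing the required cancellation.

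The sectional formula \eqref{sec_tensor_exp} then follows from \eqref{Riem_tensor_exp} by setting $(\Phi_3,\Phi_4)=(\Phi_2,\Phi_1)$: since $W_{23}=W_{14}=0$ and $W_{21}=-W_{12}$, the factor $-W_{12}W_{34}$ becomes $W_{12}^2=(\Phi_2''\Phi_1'-\Phi_1''\Phi_2')^2$, matching \eqref{sec_tensor_exp}. Finally, the sign statement is immediate: $\chi(\rho)^2\geq 0$ and $W_{12}^2\geq 0$ pointwise, while $Z\geq 0$ by the Cauchy--Schwarz inequality applied to $\Phi_1',\Phi_2'$ in the weighted space $L^2(\chi(\rho)\,dx)$, so $\bar\K(\V_{\Phi_1},\V_{\Phi_2})$ inherits pointwise the sign of $\chi''(\rho)$.
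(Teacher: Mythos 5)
Your overall strategy is the same as the paper's: specialize the general curvature formula \eqref{tensor} of Theorem \ref{theorem2} to $\Omega=\mathbb{T}^1$ and show that only the $\Gamma_{\chi''}$ block survives. Two pieces of your argument are correct and even slightly cleaner than the paper's treatment: the factorization of the $\Gamma_{\chi''}$ block as $\chi''\bigl(\Phi_2'\Delta_\chi\Phi_1-\Phi_1'\Delta_\chi\Phi_2\bigr)\bigl(\Phi_3'\Delta_\chi\Phi_4-\Phi_4'\Delta_\chi\Phi_3\bigr)=-\chi''\chi^2W_{12}W_{34}$ shows at a glance that the $\partial_x\chi$ cross-terms cancel inside that block (the paper gets this by brute-force expansion in Lemma \ref{lemma6}(iii)), and your evaluation of the commutator integrals agrees with Lemma \ref{lemma7}. (Minor slips: the correct sign is $[\V_{\Phi_i},\V_{\Phi_j}]=-\partial_x(\chi\chi'W_{ij})$, harmless in the quadratic integral; and the symmetry inherited from \eqref{abcd} is $(abcd)=(dbca)$, not $(dcba)$.) The reduction to the sectional curvature and the sign discussion via Cauchy--Schwarz are also fine.

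The genuine gap is the cancellation of everything else, which is where essentially all of the work in the paper's proof lies. You assert that each cubic term $\Gamma_\chi(\Gamma_{\chi'}(\Gamma_{\chi'}(\Phi_a,\Phi_b),\Phi_c),\Phi_d)$ can be converted by integration by parts and the Wronskian collapse into $\int\chi\chi'^2W_{\cdot\cdot}W_{\cdot\cdot}\,dx$ plus residuals, and that "the full cubic block must be shown equal to the negative of the commutator contribution." This is a statement of the target, not a proof of it, and the proposed mechanism does not work term by term: a single cubic term expands (cf.\ Lemma \ref{lemma6}(ii)) into pieces weighted by $\chi\chi'^2$, $\chi\,\partial_x(\chi'^2)$, and $\chi\chi'\partial_x\chi'$, none of which individually collapses to a Wronskian product; only the full antisymmetrized combination of all eight cubic terms together with the two $\Gamma_\chi(\Gamma_{\chi'}(\Phi_1,\Phi_3),\Gamma_{\chi'}(\Phi_2,\Phi_4))$-type terms and the three commutator terms cancels. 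Concretely, what is missing is (i) the pointwise algebraic identity of Lemma \ref{lemma8}, which balances the $\chi\chi'^2$-weighted contributions (eight cubic $\Gamma_1$-terms, two $\Gamma_1(\Gamma_1,\Gamma_1)$-terms, and the combination $W_{13}W_{24}-W_{23}W_{14}+2W_{12}W_{34}$ from the commutators, each side summing to $\pm3(\Phi_2''\Phi_4''\Phi_1'\Phi_3'+\Phi_1''\Phi_3''\Phi_2'\Phi_4'-\Phi_2''\Phi_3''\Phi_1'\Phi_4'-\Phi_1''\Phi_4''\Phi_2'\Phi_3')$), and (ii) the separate verification that the $\chi\,\partial_x(\chi'^2)$- and $\chi\chi'\partial_x\chi'$-weighted residuals vanish identically by the symmetry of the index pattern. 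Without these two verifications the claimed formula \eqref{Riem_tensor_exp}, and hence \eqref{sec_tensor_exp}, is not established.
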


The proof of Theorem \ref{thm3} is based on some calculations and cancellations. We need to prove the following lemmas. Denote $\chi'=\chi'(\rho(x))$, $\partial\chi=\partial_x\chi(\rho(x))$, $\partial \chi'=\partial_x(\chi'(\rho(x)))$, and  $\partial \chi'^2=\partial_x(\chi'(\rho(x))^2)$. 
Denote indices $a$, $b$, $c$, $d\in\{1,2,3,4\}$.
\begin{lemma}\label{lemma6}
If $\Omega=\mathbb{R}^1$, then the following equalities hold. \begin{itemize} 
\item[(i)]
\begin{equation*}
\begin{split}
&\Gamma_{\chi}(\Gamma_{\chi'}(\Phi_a, \Phi_b),\Gamma_{\chi'}(\Phi_c, \Phi_d))\\=&\chi\chi'^2\partial\Gamma_1(\Phi_a,\Phi_b)\partial\Gamma_1(\Phi_c,\Phi_d)
+\frac{1}{2}\chi\partial\chi'^2\partial(\Gamma_1(\Phi_c,\Phi_d) \Gamma_1(\Phi_a, \Phi_b))+\chi(\partial \chi')^2\Gamma_1(\Phi_c,\Phi_d)\Gamma_1(\Phi_a, \Phi_b). 
\end{split}
\end{equation*}
\item[(ii)]
\begin{equation*}
\begin{split}
&\Gamma_{\chi}(\Gamma_{\chi'}(\Gamma_{\chi'}(\Phi_a, \Phi_b), \Phi_c), \Phi_d)\\=&\chi\chi'^2\Gamma_1(\Gamma_1(\Gamma_1(\Phi_a,\Phi_b), \Phi_c),\Phi_d)
+\chi\partial\chi'^2\Gamma_1(\Gamma_1(\Phi_a,\Phi_b), \Phi_c) \Phi'_d\\
&+\chi \chi'\partial\chi'(\Phi_a'\Phi_b'\Phi_c')'\Phi'_d+\chi\big((\partial\chi')^2+\chi'\partial^2\chi'\big)\Phi_a'\Phi_b'\Phi_c'\Phi_d'. 
\end{split}
\end{equation*}
\item[(iii)]
\begin{equation*}
\begin{split}
\Gamma_{\chi''}(\Phi_a, \Phi_b)\cdot\Delta_{\chi}\Phi_c \Delta_{\chi}\Phi_d=\chi'' \Phi'_a\Phi'_b (\partial\chi\Phi'_c+\chi \Phi_c'') (\partial\chi\Phi'_d+\chi \Phi_d'').
\end{split}
\end{equation*}
\end{itemize}
\end{lemma}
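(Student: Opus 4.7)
The plan is to reduce everything to explicit one-dimensional expressions and verify each identity by direct Leibniz-rule expansion. In 1D, every $\Gamma_\eta$ collapses to $\Gamma_\eta(u,v) = \eta(\rho(x))\,u'(x)\,v'(x)$, and the spatial derivative of $\eta(\rho(x))$ is $\eta'(\rho)\rho'(x)$, so the symbols $\partial\chi$, $\partial\chi'$, $\partial\chi'^2$, $\partial^2\chi'$ all denote ordinary $x$-derivatives of composite functions of $\rho$. Throughout, I will use the two algebraic consequences of the product rule that drive all three identities: $\chi'\,\partial\chi' = \tfrac{1}{2}\partial\chi'^2$ and $\partial(\chi'\partial\chi') = (\partial\chi')^2 + \chi'\,\partial^2\chi'$.

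For (i), set $A := \Gamma_1(\Phi_a,\Phi_b)$ and $C := \Gamma_1(\Phi_c,\Phi_d)$, so that $\Gamma_{\chi'}(\Phi_a,\Phi_b) = \chi' A$ and $\Gamma_{\chi'}(\Phi_c,\Phi_d) = \chi' C$. Then $\Gamma_\chi(\chi' A,\chi' C) = \chi\,\partial(\chi' A)\,\partial(\chi' C)$. Expanding via Leibniz produces three kinds of terms: a pure $\chi\chi'^2\,\partial A\,\partial C$ piece, a pure $\chi(\partial\chi')^2 AC$ piece, and cross terms $\chi\chi'\partial\chi'(A\,\partial C + C\,\partial A)$. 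The cross terms collapse using $A\,\partial C + C\,\partial A = \partial(AC)$ together with $\chi'\,\partial\chi' = \tfrac{1}{2}\partial\chi'^2$, producing exactly the middle term of (i).

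For (ii), the inner computation is $\Gamma_{\chi'}(\chi'\Phi_a'\Phi_b',\Phi_c) = \chi'\partial\chi'\,\Phi_a'\Phi_b'\Phi_c' + \chi'^2\,(\Phi_a'\Phi_b')'\Phi_c'$ by the same Leibniz move. Applying $\Gamma_\chi(\,\cdot\,,\Phi_d) = \chi\,\partial(\cdot)\,\Phi_d'$ differentiates each summand. The derivative of $\chi'^2(\Phi_a'\Phi_b')'\Phi_c'$ contributes the leading $\chi\chi'^2\,\Gamma_1(\Gamma_1(\Gamma_1(\Phi_a,\Phi_b),\Phi_c),\Phi_d)$ term plus the $\chi\,\partial\chi'^2\,\Gamma_1(\Gamma_1(\Phi_a,\Phi_b),\Phi_c)\Phi_d'$ term; the derivative of $\chi'\partial\chi'\,\Phi_a'\Phi_b'\Phi_c'$ gives the $\chi\chi'\partial\chi'(\Phi_a'\Phi_b'\Phi_c')'\Phi_d'$ term, while its prefactor $\partial(\chi'\partial\chi') = (\partial\chi')^2 + \chi'\,\partial^2\chi'$ supplies the last term in the claimed identity.

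For (iii), only the 1D forms $\Delta_\chi\Phi = \partial\chi\cdot\Phi' + \chi\Phi''$ and $\Gamma_{\chi''}(\Phi_a,\Phi_b) = \chi''\Phi_a'\Phi_b'$ are required; multiplying the three factors gives the right-hand side verbatim. The main obstacle is the bookkeeping in (ii): one must carefully package the four contributions arising from two nested Leibniz expansions and recognize that $\chi'\partial\chi'$ appears both unchanged (inside $\chi'\partial\chi'(\Phi_a'\Phi_b'\Phi_c')'$) and as an argument to be differentiated (producing the $(\partial\chi')^2+\chi'\partial^2\chi'$ coefficient). Once this pattern is spotted, the remainder is matching notation.
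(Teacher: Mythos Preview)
Your proposal is correct and follows essentially the same approach as the paper: direct Leibniz-rule expansion of the one-dimensional $\Gamma$-operators, using the identities $\chi'\,\partial\chi' = \tfrac{1}{2}\partial\chi'^2$ and $\partial(\chi'\partial\chi') = (\partial\chi')^2 + \chi'\,\partial^2\chi'$ at exactly the same places. The only cosmetic difference is your introduction of the abbreviations $A$ and $C$ in part (i), which slightly streamlines the bookkeeping.
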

\begin{proof}
(i) From the product rule, we have 
\begin{equation*}
\begin{split}
&\Gamma_{\chi}(\Gamma_{\chi'}(\Phi_a, \Phi_b),\Gamma_{\chi'}(\Phi_c, \Phi_d))\\
=&\chi\partial (\Gamma_1(\Phi_a, \Phi_b) \chi') \partial (\Gamma_1(\Phi_c, \Phi_d) \chi')\\ 
=&\chi (\partial\Gamma_1(\Phi_a, \Phi_b) \chi'+\Gamma_1(\Phi_a, \Phi_b)\partial\chi')  (\partial\Gamma_1(\Phi_c, \Phi_d) \chi'+\Gamma_1(\Phi_c, \Phi_d) \partial\chi')\\ 
=&\chi\chi'^2\partial\Gamma_1(\Phi_a,\Phi_b)\partial\Gamma_1(\Phi_c,\Phi_d)
+\chi\chi'\partial\chi'\partial(\Gamma_1(\Phi_c,\Phi_d) \Gamma_1(\Phi_a, \Phi_b))+\Gamma_1(\Phi_c,\Phi_d)\Gamma_1(\Phi_a, \Phi_b)\chi(\partial \chi')^2. 
\end{split}
\end{equation*}
Using $\chi'\partial \chi'=\frac{1}{2}\partial \chi'^2$, we finish the proof. 

\noindent(ii) Again, from the product rule, we have
\begin{equation*}
\begin{split}
&\Gamma_{\chi}(\Gamma_{\chi'}(\Gamma_{\chi'}(\Phi_a, \Phi_b), \Phi_c), \Phi_d)\\
=&\chi \partial(\partial(\Phi'_a\Phi'_b\chi') \Phi'_c\chi') \Phi'_d\\
=&\chi \partial(\partial(\Phi'_a\Phi'_b) \Phi'_c\chi'^2) \Phi'_d+\chi \partial(\Phi'_a\Phi'_b \Phi'_c\chi'\partial\chi') \Phi'_d\\
=&\chi'^2\chi \partial(\partial(\Phi'_a\Phi'_b) \Phi'_c) \Phi'_d+\chi\partial\chi'^2\partial(\Phi'_a\Phi'_b) \Phi'_c\Phi'_d+\chi \partial(\Phi'_a\Phi'_b \Phi'_c\chi'\partial\chi') \Phi'_d\\
=&\chi\chi'^2\Gamma_1(\Gamma_1(\Gamma_1(\Phi_a,\Phi_b), \Phi_c),\Phi_d)
+\chi\partial\chi'^2\Gamma_1(\Gamma_1(\Phi_a,\Phi_b), \Phi_c) \Phi'_d+\chi \partial(\Phi_a'\Phi_b'\Phi_c'\chi'\partial\chi')\Phi_d'. 
\end{split}
\end{equation*}
Using the fact that 
\begin{equation*}
\partial(\Phi_a'\Phi_b'\Phi_c'\chi'\partial\chi')=(\Phi_a'\Phi_b'\Phi_c')'\chi'\partial\chi'+\Phi_a'\Phi_b'\Phi_c'\big((\partial\chi')^2+\chi'\partial^2\chi'\big),
\end{equation*}
we finish the proof. 

(iii) It follows from the definition. 
\end{proof}

\begin{lemma}\label{lemma7}
If $\Omega=\mathbb{R}^1$, then the following equality holds. 
\begin{equation*}
-\int [\V_{\Phi_a}, \V_{\Phi_b}]\cdot\Delta_{\chi}^{\dd}[\V_{\Phi_c}, \V_{\Phi_d}]dx=\int \chi\chi'^2(\Phi_a''\Phi'_b-\Phi''_b\Phi'_a)(\Phi_c''\Phi'_d-\Phi''_d\Phi'_c)dx.
\end{equation*}
\end{lemma}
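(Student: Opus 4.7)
The plan is to exploit the one-dimensional structure to write $[\V_{\Phi_a},\V_{\Phi_b}]$ as an exact spatial derivative, then collapse the action of $\Delta_\chi^{\dd}$ by a single integration by parts.

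First I would specialize the bracket formula \eqref{comm_ex} to $\Omega=\mathbb{T}^1$. In one dimension $\Delta_{\V_{\Phi_a}\chi}\Phi_b = -\partial_x\bigl((\chi\Phi_a')'\,\chi'\Phi_b'\bigr)$, and expanding $(\chi\Phi_a')' = \chi'\rho'\Phi_a' + \chi\Phi_a''$ isolates two contributions. The $\chi'^2\rho'\Phi_a'\Phi_b'$ piece is symmetric under $a \leftrightarrow b$ and therefore cancels in the combination $-\Delta_{\V_{\Phi_a}\chi}\Phi_b + \Delta_{\V_{\Phi_b}\chi}\Phi_a$, leaving
\[
[\V_{\Phi_a},\V_{\Phi_b}] \;=\; \partial_x\bigl(\chi\chi'\,(\Phi_a''\Phi_b' - \Phi_b''\Phi_a')\bigr) \;=:\; \partial_x G_{ab},
\]
and by symmetry $[\V_{\Phi_c},\V_{\Phi_d}] = \partial_x G_{cd}$.

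Next, let $\Psi := \Delta_\chi^{\dd}[\V_{\Phi_c},\V_{\Phi_d}]$; by definition $\partial_x(\chi\Psi') = \partial_x G_{cd}$, and integrating once in $x$ yields $\chi\Psi' = G_{cd}$. The freedom in the integration constant is immaterial for the pairing we want to compute: $[\V_{\Phi_a},\V_{\Phi_b}] = \partial_x G_{ab}$ has zero mean on $\mathbb{T}^1$ as the derivative of a periodic function, so adding any constant to $\Psi$ leaves $-\int [\V_{\Phi_a},\V_{\Phi_b}]\Psi\, dx$ unchanged.

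Finally, on $\mathbb{T}^1$ there are no boundary contributions, so
\[
-\int_\Omega [\V_{\Phi_a},\V_{\Phi_b}]\,\Psi\,dx
\;=\; -\int_\Omega (\partial_x G_{ab})\,\Psi\,dx
\;=\; \int_\Omega G_{ab}\,\Psi'\,dx
\;=\; \int_\Omega \frac{G_{ab}\,G_{cd}}{\chi}\,dx.
\]
Substituting $G_{ab}=\chi\chi'(\Phi_a''\Phi_b'-\Phi_b''\Phi_a')$ and $G_{cd}=\chi\chi'(\Phi_c''\Phi_d'-\Phi_d''\Phi_c')$ produces the weight $\chi(\chi')^2$ and the product of Wronskian-like combinations on the right-hand side. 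The main obstacle I expect is the bookkeeping in the first step: unpacking the bracket and verifying the cancellation of the $\chi'^2\rho'\Phi_a'\Phi_b'$ cross terms via the antisymmetry in $a \leftrightarrow b$. Once $[\V_{\Phi_a},\V_{\Phi_b}]$ is recognized as an exact $x$-derivative, the remainder is a one-line integration by parts.
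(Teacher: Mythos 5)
Your argument is essentially the paper's own proof: both first reduce \eqref{comm_ex} on $\mathbb{T}^1$ to $[\V_{\Phi_a},\V_{\Phi_b}]=\partial_x\bigl(\chi\chi'(\Phi_a''\Phi_b'-\Phi_b''\Phi_a')\bigr)$ via the cancellation of the $a\leftrightarrow b$ symmetric piece $\chi'\rho'\Phi_a'\Phi_b'$, and then evaluate the pairing by inverting $\Delta_\chi=\partial_x(\chi\,\partial_x\cdot)$ and integrating by parts once, yielding $\int G_{ab}G_{cd}\chi^{-1}dx$. The one remark worth making is that the constant you verify to be immaterial (an additive shift of $\Psi$) is not the only one in play: integrating $\partial_x(\chi\Psi')=\partial_x G_{cd}$ gives $\chi\Psi'=G_{cd}+C$ with $C$ fixed by periodicity of $\Psi$ on the torus, and both you and the paper take $C=0$ at this step, so your write-up is faithful to, and no less rigorous than, the paper's argument.
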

\begin{proof}
We first show that 
\begin{equation*}
 [\V_{\Phi_a}, \V_{\Phi_b}]=-\partial\big(\chi\chi'[\Phi_b''\Phi'_a-\Phi''_a\Phi'_b]\big). 
\end{equation*}
From equation \eqref{comm_ex}, we have 
\begin{equation*}
\begin{split}
-\Delta_{\V_{\Phi_a}\chi}\Phi_b=&\partial\big(\chi' \partial(\chi\Phi_a')\Phi'_b)\\
=&\partial\big(\chi'\partial\chi \Phi'_a\Phi'_b+\chi'\chi\Phi''_a\Phi'_b\big). 
\end{split}
\end{equation*}
Thus, 
\begin{equation*}
[\V_{\Phi_a}, \V_{\Phi_b}]=-\Delta_{\V_{\Phi_a}\chi}\Phi_b+\Delta_{\V_{\Phi_b}\chi}\Phi_a=\partial\big(\chi'\chi[\Phi''_a\Phi'_b-\Phi''_b\Phi'_a]\big).
\end{equation*}
Similarly, 
\begin{equation*}
 [\V_{\Phi_c}, \V_{\Phi_d}]=-\partial\big(\chi\chi'[\Phi_d''\Phi'_c-\Phi''_c\Phi'_d]\big). 
\end{equation*}
Thus, 
\begin{equation*}
\begin{split}
-\int [\V_{\Phi_a}, \V_{\Phi_b}]\cdot\Delta_{\chi}^{\dd}[\V_{\Phi_c}, \V_{\Phi_d}]dx
=&\int \big(\chi\chi'[\Phi_b''\Phi'_a-\Phi''_a\Phi'_b]\big)
\frac{1}{\chi}\big(\chi\chi'[\Phi_d''\Phi'_c-\Phi''_c\Phi'_d]\big)dx\\
=&\int \chi\chi'^2[\Phi_b''\Phi'_a-\Phi''_a\Phi'_b]
[\Phi_d''\Phi'_c-\Phi''_c\Phi'_d]dx, 
\end{split}
\end{equation*}
which finishes the proof. 
\end{proof}

\begin{lemma}\label{lemma8}
If $\Omega=\mathbb{R}^1$, then the following equality holds. 
\begin{equation*}
\begin{split}
&-\Gamma_{1}(\Gamma_{1}(\Gamma_{1}(\Phi_2,\Phi_4), \Phi_1), \Phi_3)-\Gamma_{1}(\Gamma_{1}(\Gamma_{1}(\Phi_2,\Phi_4), \Phi_3), \Phi_1)\\
&-\Gamma_{1}(\Gamma_{1}(\Gamma_{1}(\Phi_1,\Phi_3), \Phi_2), \Phi_4)-\Gamma_{1}(\Gamma_{1}(\Gamma_{1}(\Phi_1,\Phi_3), \Phi_4), \Phi_2)\\
&+\Gamma_{1}(\Gamma_{1}(\Gamma_{1}(\Phi_2,\Phi_3), \Phi_1), \Phi_4)+\Gamma_{1}(\Gamma_{1}(\Gamma_{1}(\Phi_2,\Phi_3), \Phi_4), \Phi_1)\\
&+\Gamma_{1}(\Gamma_{1}(\Gamma_{1}(\Phi_1,\Phi_4), \Phi_2), \Phi_3)+\Gamma_{1}(\Gamma_{1}(\Gamma_{1}(\Phi_1,\Phi_4), \Phi_3), \Phi_2)\\
&+\Gamma_{1}(\Gamma_{1}(\Phi_1, \Phi_3), \Gamma_{1}(\Phi_2, \Phi_4))- \Gamma_{1}(\Gamma_{1}(\Phi_2, \Phi_3), \Gamma_{1}(\Phi_1, \Phi_4))\\
&+(\Phi_3''\Phi'_1-\Phi''_1\Phi'_3)(\Phi_4''\Phi'_2-\Phi''_2\Phi'_4)-(\Phi_3''\Phi'_2-\Phi''_2\Phi'_3)(\Phi_4''\Phi'_1-\Phi''_1\Phi'_4)\\
&+2(\Phi_2''\Phi'_1-\Phi''_1\Phi'_2)(\Phi_4''\Phi'_3-\Phi''_3\Phi'_4)=0.
\end{split}
\end{equation*}
\end{lemma}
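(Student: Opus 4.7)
The identity in Lemma~\ref{lemma8} is a purely algebraic relation in one dimension among products of derivatives of $\Phi_1,\Phi_2,\Phi_3,\Phi_4$: since $\Gamma_1(f,g)=f'g'$ on $\mathbb{T}^1$, each iterated $\Gamma_1$ expression expands by the Leibniz rule into a multilinear polynomial in the first, second, and third derivatives of the $\Phi_i$. The plan is direct expansion followed by termwise cancellation, organized by the derivative-order signature carried by each $\Phi_i$. Introducing the shorthand $a_i:=\Phi_i'$, $\bar a_i:=\Phi_i''$, $\hat a_i:=\Phi_i'''$, one records the master expansion
\begin{equation*}
\Gamma_1(\Gamma_1(\Gamma_1(\Phi_i,\Phi_j),\Phi_k),\Phi_l) = \hat a_i a_j a_k a_l + a_i \hat a_j a_k a_l + 2\bar a_i \bar a_j a_k a_l + \bar a_i a_j \bar a_k a_l + a_i \bar a_j \bar a_k a_l,
\end{equation*}
together with $\Gamma_1(\Gamma_1(\Phi_i,\Phi_j),\Gamma_1(\Phi_k,\Phi_l))=(\bar a_i a_j+a_i\bar a_j)(\bar a_k a_l+a_k\bar a_l)$ and the expansion of the Wronskian products $W_{ij}W_{kl}$ with $W_{ij}:=\bar a_i a_j-\bar a_j a_i$.

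Next I would handle the third-derivative monomials separately; this is the one step with structure rather than pure bookkeeping. The key observation is that $\hat a_i$ appears in $\Gamma_1(\Gamma_1(\Gamma_1(\Phi_a,\Phi_b),\Phi_c),\Phi_d)$ only when $i\in\{a,b\}$ (the innermost pair), and the $\Gamma_1(\Gamma_1,\Gamma_1)$-type and Wronskian terms contain no third derivative at all. Scanning the eight cubic-$\Gamma_1$ terms in the statement, each fixed index $i\in\{1,2,3,4\}$ occurs in the innermost pair of exactly four of them, with associated signs $-,-,+,+$; since the companion factor $a_j a_k a_l$ is identical in all four, they cancel identically. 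After this step only monomials with exactly two second derivatives and two first derivatives remain, and the remaining classification reduces to six signatures indexed by which pair $\{i,j\}\subset\{1,2,3,4\}$ carries the two primes $''$.

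The remaining task is to tabulate, for each such signature, the coefficient coming from the eight cubic-$\Gamma_1$ terms, the two $\Gamma_1(\Gamma_1,\Gamma_1)$-terms, and the three $W_{ij}W_{kl}$ terms, and verify it vanishes. I would exploit the symmetry of the master expansion under $(i,j)\leftrightarrow(j,i)$, the antisymmetry $W_{ij}=-W_{ji}$, and the fact that the whole left-hand side is antisymmetric in $(\Phi_1,\Phi_2)$ and in $(\Phi_3,\Phi_4)$ and symmetric under $(1,2)\leftrightarrow(3,4)$, so that the six signatures collapse into three equivalence classes and a single hand-check per class suffices. The main obstacle is the sheer volume of bookkeeping: the eight cubic-$\Gamma_1$ terms (each expanding to five monomials), two $\Gamma_1(\Gamma_1,\Gamma_1)$-terms (each to four), and three Wronskian products (each to four) produce roughly sixty signed contributions with no deeper structural shortcut, so sign errors are the main risk; I would perform the three representative signature computations in full detail and invoke the aforementioned symmetries to recover the remaining cases, rather than grind through every signature independently.
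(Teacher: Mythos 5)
Your proposal is correct and follows essentially the same route as the paper: expand every term using $\Gamma_1(f,g)=f'g'$ and the product rule, observe that the third-derivative monomials cancel among the eight triple-$\Gamma_1$ terms, and then verify that the coefficient of each remaining monomial $\Phi_i''\Phi_j''\Phi_k'\Phi_l'$ vanishes (the paper groups the triple-$\Gamma_1$ and $\Gamma_1(\Gamma_1,\Gamma_1)$ contributions into $3\bigl(-\Phi_2''\Phi_4''\Phi_1'\Phi_3'-\Phi_1''\Phi_3''\Phi_2'\Phi_4'+\Phi_2''\Phi_3''\Phi_1'\Phi_4'+\Phi_1''\Phi_4''\Phi_2'\Phi_3'\bigr)$ and the Wronskian products into its negative). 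One cosmetic remark: under the symmetries you list, the six signatures actually fall into two orbits rather than three, and the $\{1,2\}$ and $\{3,4\}$ coefficients vanish automatically from the antisymmetry alone, so your bookkeeping can be trimmed slightly further; this does not affect the validity of the argument.
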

\begin{proof}
We first compute that  
\begin{equation*}
\begin{split}
&\Gamma_{1}(\Gamma_{1}(\Gamma_{1}(\Phi_2,\Phi_4), \Phi_1), \Phi_3)+\Gamma_{1}(\Gamma_{1}(\Gamma_{1}(\Phi_2,\Phi_4), \Phi_3), \Phi_1)\\
=&2(\Phi'_2\Phi'_4)''\Phi'_1\Phi'_3+(\Phi'_1\Phi'_3)'(\Phi'_2\Phi'_4)'.
\end{split}
\end{equation*}
By some computations, we have
\begin{equation*}
\begin{split}
&-\Gamma_{1}(\Gamma_{1}(\Gamma_{1}(\Phi_2,\Phi_4), \Phi_1), \Phi_3)-\Gamma_{1}(\Gamma_{1}(\Gamma_{1}(\Phi_2,\Phi_4), \Phi_3), \Phi_1)\\
&-\Gamma_{1}(\Gamma_{1}(\Gamma_{1}(\Phi_1,\Phi_3), \Phi_2), \Phi_4)-\Gamma_{1}(\Gamma_{1}(\Gamma_{1}(\Phi_1,\Phi_3), \Phi_4), \Phi_2)\\
&+\Gamma_{1}(\Gamma_{1}(\Gamma_{1}(\Phi_2,\Phi_3), \Phi_1), \Phi_4)+\Gamma_{1}(\Gamma_{1}(\Gamma_{1}(\Phi_2,\Phi_3), \Phi_4), \Phi_1)\\
&+\Gamma_{1}(\Gamma_{1}(\Gamma_{1}(\Phi_1,\Phi_4), \Phi_2), \Phi_3)+\Gamma_{1}(\Gamma_{1}(\Gamma_{1}(\Phi_1,\Phi_4), \Phi_3), \Phi_2)\\
&+\Gamma_{1}(\Gamma_{1}(\Phi_1, \Phi_3), \Gamma_{1}(\Phi_2, \Phi_4))- \Gamma_{1}(\Gamma_{1}(\Phi_2, \Phi_3), \Gamma_{1}(\Phi_1, \Phi_4))\\
=&3\Big(-\Phi_2''\Phi_4''\Phi_1'\Phi_3'-\Phi_1''\Phi_3''\Phi_2'\Phi_4'+\Phi_2''\Phi_3''\Phi_1'\Phi_4'+\Phi_1''\Phi_4''\Phi_2'\Phi_3'\Big).
\end{split}
\end{equation*}
On the other hand, 
\begin{equation*}
\begin{split}
&\quad (\Phi_3''\Phi'_1-\Phi''_1\Phi'_3)(\Phi_4''\Phi'_2-\Phi''_2\Phi'_4)-(\Phi_3''\Phi'_2-\Phi''_2\Phi'_3)(\Phi_4''\Phi'_1-\Phi''_1\Phi'_4)\\
&+2(\Phi_2''\Phi'_1-\Phi''_1\Phi'_2)(\Phi_4''\Phi'_3-\Phi''_3\Phi'_4)\\
=&3\Big(\Phi_2''\Phi_4''\Phi_1'\Phi_3'+\Phi_1''\Phi_3''\Phi_2'\Phi_4'-\Phi_2''\Phi_3''\Phi_1'\Phi_4'-\Phi_1''\Phi_4''\Phi_2'\Phi_3'\Big). 
\end{split}
\end{equation*}
Summing up the above two formulas, we finish the proof. 
\end{proof}
We are ready to prove Theorem \ref{thm3}. 
\begin{proof}[Proof of Theorem \ref{thm3}]
Using Lemmas \ref{lemma6}, \ref{lemma7} in Theorem \ref{theorem2} and direct calculations, we have
\begin{equation*}
\begin{split}
&\langle \bar \R(\V_{\Phi_1}, \V_{\Phi_2})\V_{\Phi_3}, \V_{\Phi_4}\rangle\\
=&\quad\frac{1}{2}\int  \chi''\chi^2\Big\{-\Phi'_2\Phi'_4\Phi''_1\Phi''_3-\Phi'_1\Phi'_3\Phi''_2\Phi''_4+\Phi'_2\Phi'_3\Phi''_1\Phi''_4+\Phi'_1\Phi'_4\Phi''_2 \Phi''_3\Big\}dx\\
&+\frac{1}{4}\int \chi\chi'^2\Big\{-\Gamma_{1}(\Gamma_{1}(\Gamma_{1}(\Phi_2,\Phi_4), \Phi_1), \Phi_3)-\Gamma_{1}(\Gamma_{1}(\Gamma_{1}(\Phi_2,\Phi_4), \Phi_3), \Phi_1)\\
&\hspace{2.3cm}-\Gamma_{1}(\Gamma_{1}(\Gamma_{1}(\Phi_1,\Phi_3), \Phi_2), \Phi_4)-\Gamma_{1}(\Gamma_{1}(\Gamma_{1}(\Phi_1,\Phi_3), \Phi_4), \Phi_2)\\
&\hspace{2.3cm}+\Gamma_{1}(\Gamma_{1}(\Gamma_{1}(\Phi_2,\Phi_3), \Phi_1), \Phi_4)+\Gamma_{1}(\Gamma_{1}(\Gamma_{1}(\Phi_2,\Phi_3), \Phi_4), \Phi_1)\\
&\hspace{2.3cm}+\Gamma_{1}(\Gamma_{1}(\Gamma_{1}(\Phi_1,\Phi_4), \Phi_2), \Phi_3)+\Gamma_{1}(\Gamma_{1}(\Gamma_{1}(\Phi_1,\Phi_4), \Phi_3), \Phi_2)\\
&\hspace{2.3cm}+\Gamma_{1}(\Gamma_{1}(\Phi_1, \Phi_3), \Gamma_{1}(\Phi_2, \Phi_4))- \Gamma_{1}(\Gamma_{1}(\Phi_2, \Phi_3), \Gamma_{1}(\Phi_1, \Phi_4))\\
&\hspace{2.3cm}+(\Phi_3''\Phi'_1-\Phi''_1\Phi'_3)(\Phi_4''\Phi'_2-\Phi''_2\Phi'_4)-(\Phi_3''\Phi'_2-\Phi''_2\Phi'_3)(\Phi_4''\Phi'_1-\Phi''_1\Phi'_4)\\
&\hspace{2.3cm}+2(\Phi_2''\Phi'_1-\Phi''_1\Phi'_2)(\Phi_4''\Phi'_3-\Phi''_3\Phi'_4)\Big\}dx\\
&+\frac{1}{4}\int \chi\partial\chi'^2\Big\{-\Gamma_1(\Gamma_1(\Phi_2,\Phi_4), \Phi_1) \Phi'_3-\Gamma_1(\Gamma_1(\Phi_2,\Phi_4), \Phi_3) \Phi'_1\\
&\hspace{2.6cm}-\Gamma_1(\Gamma_1(\Phi_1,\Phi_3), \Phi_2) \Phi'_4-\Gamma_1(\Gamma_1(\Phi_1,\Phi_3), \Phi_4) \Phi'_2\\
&\hspace{2.6cm}+\Gamma_1(\Gamma_1(\Phi_2,\Phi_3), \Phi_1) \Phi'_4+\Gamma_1(\Gamma_1(\Phi_2,\Phi_3), \Phi_4) \Phi'_1\\
&\hspace{2.6cm}+\Gamma_1(\Gamma_1(\Phi_1,\Phi_4), \Phi_2) \Phi'_3+\Gamma_1(\Gamma_1(\Phi_1,\Phi_4), \Phi_3) \Phi'_2\\
&\hspace{2.6cm}+\frac{1}{2}\partial(\Gamma_1(\Phi_1,\Phi_3) \Gamma_1(\Phi_2, \Phi_4))-\frac{1}{2}\partial(\Gamma_1(\Phi_2,\Phi_3) \Gamma_1(\Phi_1, \Phi_4))\Big\}dx\\
&+\frac{1}{4}\int \chi\chi'\partial\chi'\Big\{-\partial(\Phi'_2\Phi'_4 \Phi'_1) \Phi'_3-\partial(\Phi'_2\Phi'_4 \Phi'_3) \Phi'_1-\partial(\Phi'_1\Phi'_3 \Phi'_2) \Phi'_4-\partial(\Phi'_1\Phi'_3 \Phi'_4) \Phi'_2\\
&\hspace{2.7cm}+\partial(\Phi'_2\Phi'_3 \Phi'_1) \Phi'_4+\partial(\Phi'_2\Phi'_3 \Phi'_4) \Phi'_1+\partial(\Phi'_1\Phi'_4 \Phi'_2) \Phi'_3+\partial(\Phi'_1\Phi'_4 \Phi'_3) \Phi'_2\Big\}dx\\
=&\quad\frac{1}{2}\int  \chi''\chi^2\Big\{-\Phi'_2\Phi'_4\Phi''_1\Phi''_3-\Phi'_1\Phi'_3\Phi''_2\Phi''_4+\Phi'_2\Phi'_3\Phi''_1\Phi''_4+\Phi'_1\Phi'_4\Phi''_2 \Phi''_3\Big\}dx. 
\end{split}
\end{equation*}
In the last equality of the above formula, we use the fact that the coefficient of $\chi'$ becomes zero and apply the result in Lemma \ref{lemma8}. This finishes the derivation of the Riemannian curvature in $(\sP_+, \g)$. In addition, 
{
\begin{equation*}
\begin{split}
&\langle \bar \R(\V_{\Phi_1}, \V_{\Phi_2})\V_{\Phi_2}, \V_{\Phi_1}\rangle\\
=&\frac{1}{2}\int \chi''\chi^2\Big\{-\Phi_2'\Phi_1'\Phi_1''\Phi_2''-\Phi_1'\Phi_2'\Phi_2''\Phi_1''+\Phi'_2\Phi'_2\Phi''_1\Phi''_1+\Phi'_1\Phi'_1\Phi''_2\Phi''_2\Big\}dx\\
=&\frac{1}{2}\int \chi''\chi^2 \Big(\Phi''_2\Phi_1'-\Phi_1''\Phi'_2\Big)^2dx .\end{split}
\end{equation*}}
This proves equation \eqref{sec_tensor_exp}. {Note that $Z\geq 0$.} If $\chi(\rho)$ is convex in $\rho$, then $\chi''(\rho)\geq 0$ for any $\rho$. Thus $\bar\K(\V_{\Phi_1}, \V_{\Phi_2})\geq 0$. If $\chi(\rho)$ is concave in $\rho$, then $\chi''(\rho)\leq 0$. Thus $\bar\K(\V_{\Phi_1}, \V_{\Phi_2})\leq 0$. 
\end{proof}
\begin{remark}
{Estimating sectional curvature on density manifolds in high dimensional spaces is an interesting problem. It requires a detailed derivation of the commutator $[\V_{\Phi_1}, \V_{\Phi_2}]$ from Lemma \ref{communtator}, which has been studied in the Wasserstein--2 metric \cite{Lott}. For simplicity of presentation, this paper focuses on the Gamma calculus formulation of curvatures on generalized density manifolds, and we leave quantitative estimates in high dimensional space to future work.}
\end{remark}
\section{Examples}\label{sec5}
In this section, we present physics models induced hydrodynamical density manifolds with geometric quantities, including metrics and sectional curvatures. Three examples of zero range models are studied, including independent particles, simple exclusion processes, and Kipnis-Marchioro-Presutti models.

Consider {a class of zero range models}. The mobility matrix functions are often chosen as $\chi(\rho)=\varphi(\rho)\mathbb{I}$, where $\varphi\in C^{\infty}(\mathbb{R})$, $\varphi(\rho)\geq 0$ and $\mathbb{I}\in\mathbb{R}^{d\times d}$ is an identity matrix. 
In this case, the local Einstein condition satisfies {$D(\rho)=f''(\rho)\cdot \varphi(\rho)$}. Thus, the free energy functional follows 
\begin{equation*}
\mathrm{D}_{f}(\rho, \pi)=\int \Big[f(\rho)-f(\pi)-f'(\pi)(\rho-\pi)\Big]dx. 
\end{equation*}
{Thus, the hydrodynamics \eqref{hydro} satisfies
\begin{equation*}
\begin{split}
\partial_t\rho
=&\nabla\cdot\big(\varphi(\rho)\nabla \frac{\delta}{\delta\rho}\mathrm{D}_f(\rho,\pi)\big).
\end{split}
\end{equation*}}
Both free energy and equation \eqref{hydro} define a Riemannian density manifold $(\sP_+, \g)$. Denote $\Phi_1$, $\Phi_2\in C^{\infty}(\Omega)$, then the Riemannian metric $\g$ is defined as
\begin{equation*}
\g(\V_{\Phi_1}, \V_{\Phi_2})=\int (\nabla\Phi_1, \nabla\Phi_2)\varphi(\rho)dx, 
\end{equation*}
where $\V_{\Phi_i}=-\nabla\cdot(\varphi(\rho)\nabla \Phi_i)$, $i=1,2$. When $\Omega=\mathbb{R}^1$, from Theorem \ref{thm3}, the sectional curvature in $(\sP_+, \g)$ satisfies 
\begin{equation*}
\begin{split}
\bar\K(\V_{\Phi_1}, \V_{\Phi_2})
=&\frac{1}{2Z}\int \varphi''(\rho)\varphi(\rho)^2\big(\Phi''_2\Phi'_1-\Phi''_1\Phi'_2\big)^2dx,
\end{split}
\end{equation*}
where $Z$ is a nonnegative constant defined as $Z=\int |\Phi'_1|^2\varphi(\rho) dx\cdot \int |\Phi'_2|^2\varphi(\rho) dx-(\int \Phi'_1\Phi'_2\varphi(\rho)dx)^2$. 
\begin{example}[Independent particles]
Consider a zero range process with independent particles, i.e. $\varphi(\rho)=\rho\mathbb{I}$, and $D(\rho)=\mathbb{I}$. Denote the free energy with $f(\rho)=\rho\log\rho$, such that 
\begin{equation*}
\mathrm{D}_{f}(\rho, \pi)=\int \Big[\rho\log\rho-\rho\log\pi\Big] dx. 
\end{equation*}
The hydrodynamics \eqref{hydro} satisfies 
\begin{equation*}
\partial_t\rho=-\nabla\cdot(\rho E)+\Delta\rho=\nabla\cdot(\rho\nabla\log\frac{\rho}{\pi}), \quad\textrm{where}\quad E=\nabla\log \pi.
\end{equation*}
{The local Einstein relation satisfies $(\rho\log\rho)''\cdot \rho=\frac{1}{\rho}\cdot\rho=1$.} The free energy and equation \eqref{hydro} define a density manifold $(\sP_+, \g)$, namely Wasserstein-2 metric space \cite{Lafferty, Lott, vil2008}. Thus, the metric $\g$ satisfies 
\begin{equation*}
\g(\V_{\Phi_1}, \V_{\Phi_2})=\int (\nabla\Phi_1, \nabla\Phi_2)\rho dx. 
\end{equation*}
If $\Omega=\mathbb{R}^1$, {both Riemannian and sectional curvatures} of Wasserstein-2 space \cite{am2006,Lafferty,Lott} {are zero. In particular,}
\begin{equation*}
\bar\K(\V_{\Phi_1}, \V_{\Phi_2})=0. 
\end{equation*}
\end{example}
\begin{example}[Simple exclusion]
Consider a simple exclusion process \cite{MFT}, defined by mobility functions $\chi(\rho)=\rho(1-\rho)\mathbb{I}$ and $D(\rho)=\mathbb{I}$. {
We let $\rho\in[0,1]$, such that mobility $\chi(\rho)\geq 0$. 
The operator $-\nabla\cdot(\chi(\rho)\nabla)$ becomes a degenerate elliptic operator at $\rho=0$ and $\rho=1$. 
Here $\rho=0$ and $\rho=1$ act as boundary points of the density manifold. To avoid this degeneracy, we only study the smooth density submanifold
\[
\mathcal{P}_{(0,1)}=\Big\{\rho\in C^\infty(\Omega)~\colon~\int_\Omega \rho dx=1,\quad 0<\rho<1\Big\}.
\]
All derivations below remain valid on the density submanifold.}

Define the free energy with $f(\rho)=\rho\log\rho+(1-\rho)\log(1-\rho)$, such that
\begin{equation*}
\mathrm{D}_{f}(\rho, \pi)=\int \Big[\rho\log\frac{\rho (1-\pi)}{\pi(1-\rho)} +\log\frac{1-\rho}{1-\pi} \Big]dx. 
\end{equation*}
The hydrodynamics \eqref{hydro} satisfies 
\begin{equation*}
\partial_t\rho=-\nabla\cdot(\rho (1-\rho) E)+\Delta\rho=\nabla\cdot(\rho(1-\rho)\nabla \log\frac{\rho (1-\pi)}{(1-\rho)\pi}),\quad \textrm{where}\quad E=\nabla\log \frac{\pi}{1-\pi}.
\end{equation*}
{The local Einstein relation satisfies $(\rho\log\rho+(1-\rho)\log(1-\rho))''\cdot \rho (1-\rho)=\frac{1}{\rho(1-\rho)}\cdot\rho(1-\rho)=1$.} The free energy and equation \eqref{hydro} defines a manifold {$(\mathcal{P}_{(0,1)}, \g)$}, where 
the metric $\g$ satisfies 
\begin{equation*}
\g(\V_{\Phi_1}, \V_{\Phi_2})=\int (\nabla\Phi_1, \nabla\Phi_2)\rho(1-\rho) dx. 
\end{equation*}
If $\Omega=\mathbb{R}^1$, the sectional curvature in {$(\mathcal{P}_{(0,1)}, \g)$} follows 
\begin{equation*}
\bar\K(\V_{\Phi_1}, \V_{\Phi_2})= -\frac{1}{Z}\int \rho^2(1-\rho)^2\big(\Phi''_2\Phi'_1-\Phi''_1\Phi'_2\big)^2dx\leq 0. 
\end{equation*}

\end{example}

\begin{example}[Kipnis-Marchioro-Presutti model]
Consider a Kipnis-Marchioro-Presutti model \cite{KMP} with mobility functions $\chi(\rho)=\rho^2\mathbb{I}$ and $D(\rho)=\mathbb{I}$. The model is defined from heat conduction in a crystal. 
Define the free energy with $f(\rho)=-\log\rho$, such that
\begin{equation*}
\mathrm{D}_{f}(\rho, \pi)=\int \Big[\frac{\rho}{\pi}-\log\frac{\rho}{\pi}-1\Big]dx. 
\end{equation*}
The hydrodynamics \eqref{hydro} satisfies 
\begin{equation*}
\partial_t\rho=-\nabla\cdot(\rho^2 E)+\Delta\rho=\nabla\cdot(\rho^2\nabla (\frac{1}{\pi}-\frac{1}{\rho})),\quad \textrm{where}\quad E=-\nabla\frac{1}{\pi}.
\end{equation*}
{The local Einstein relation satisfies $(-\log\rho)''\cdot \rho^2=\frac{1}{\rho^2}\cdot\rho^2=1$.}
Again, the free energy and equation \eqref{hydro} defines a density manifold {$(\sP_+, \g)$}, where 
the metric $\g$ satisfies 
 \begin{equation*}
\g(\V_{\Phi_1}, \V_{\Phi_2})=\int (\nabla\Phi_1, \nabla\Phi_2)\rho^2 dx. 
\end{equation*}
If $\Omega=\mathbb{R}^1$, the sectional curvature in $(\sP_+, \g)$ follows 
\begin{equation*}
\bar\K(\V_{\Phi_1}, \V_{\Phi_2})= \frac{1}{Z}\int \rho^4\big(\Phi''_2\Phi'_1-\Phi''_1\Phi'_2\big)^2dx\geq 0. 
\end{equation*}
\end{example}
 \section{Discussions}\label{sec6}
 {Our work is motivated by the Otto calculus \cite{otto2001}, from which one studies the Riemannian manifold structure for the Wasserstein-2 space.} This paper develops geometric calculations for hydrodynamical density manifolds arising from Onsager reciprocal relations.
Irreversible processes in complex systems can be formulated as gradient flows on hydrodynamical density manifolds $(\mathcal{P}_+, \g)$, where the metric $\g$ is induced by Onsager response operators. These metrics can be viewed as generalizations of the classical Wasserstein-$2$ metric operator. We derive the Levi--Civita connection, parallel transport, geodesics, and Riemannian and sectional curvatures in these manifolds. In addition, we provide explicit formulas for curvature quantities in one dimensional domains, together with several concrete examples for zero range models, including independent particles, simple exclusion processes, and the Kipnis--Marchioro--Presutti model.  

In non-equilibrium thermodynamics, Onsager response operators have been extensively studied in the MFT~\cite{MFT} and the GENERIC framework~\cite{GEN}. These operators also induce Riemannian metrics on hydrodynamical density manifolds. This paper introduces a corresponding family of infinite dimensional curvature quantities, which we refer to as \emph{macroscopic fluctuation curvatures}. In future work, we will study these curvatures for general physical domains, for example when $\Omega$ is a finite-dimensional Riemannian manifold \cite{Nakamura2024}. In such settings, the macroscopic curvatures interact with the geometric structure of $\Omega$; see Theorem~\ref{theorem2}. Classical examples include Gamma calculus~\cite{BE} on $\Omega$ and Hessian operators for entropy functionals on density manifolds~\cite{C1,LiG2,vil2008}. The geometric framework developed here provides a way to introduce density-dependent or mean-field-type curvatures on the underlying sample space.  {In physical models, one may use the signs of sectional curvatures to distinguish different classes of mobility functions in macroscopic models.
We leave the estimation of these curvature quantities in future work.} Another ongoing direction is to design fast and efficient computational algorithms for hydrodynamics in generalized density manifolds that preserve geometric structures, including free-energy dissipations.
{In applications, we expect that the estimation of macroscopic curvatures will play an important role in the development of stochastic interacting particle systems for machine learning related sampling problems~\cite{LiG5}. For example, we will study accelerated sampling schemes on generalized density manifolds, in which the selection of mobility function, the optimal damping parameters, and the convergence analysis of the scheme depend on the estimations of sectional curvatures and geodesic convexities of free energies.}

\noindent\textbf{Acknowledgements}. {W. Li's work is supported by AFOSR YIP award No. FA9550-23-1-0087, NSF RTG: 2038080, NSF DMS-2245097, and the McCausland Faculty Fellow in University of South Carolina.}

\end{document}